\documentclass[twocolumn]{autart}

\usepackage{amsmath,amssymb,amsfonts}
\usepackage{xcolor}
\usepackage{graphicx}
\usepackage{algorithm}
\usepackage{algorithmic}
\usepackage{booktabs}
\usepackage{makecell}
\usepackage{multirow}
\usepackage{textcomp}

\usepackage{stfloats}
\usepackage{placeins}
\usepackage{float}
\usepackage{wrapfig}
\usepackage{needspace}
\usepackage[authoryear,longnamesfirst]{natbib}

\newtheorem{definition}{Definition}
\newtheorem{assumption}{Assumption}
\newtheorem{theorem}{Theorem}
\newtheorem{lemma}{Lemma}

\newtheorem{remark}{Remark}

\newenvironment{proof}{\noindent{\bf Proof.}}{\hfill$\square$\par}

\begin{document}

\begin{frontmatter}

\title{Separation Principle for Event-Triggered Prescribed-Time Consensus Tracking of Nonlinear Multi-Agent Systems under DoS Attacks}

% \thanks[footnoteinfo]{This paper was not presented at any IFAC meeting. Corresponding author: Changyun Wen.}

\author[NTU]{Hongjian Chen}\ead{hongjian001@e.ntu.edu.sg},
\author[UM]{Hefu Ye}\ead{hefuye@um.edu.mo},
\author[NTU]{Changyun Wen}\ead{ecywen@ntu.edu.sg}

\address[NTU]{School of Electrical and Electronic Engineering, 
Nanyang Technological University, Singapore 639798, Singapore}

\address[UM]{Faculty of Science and Technology, 
University of Macau, Macau 999078, China}

\begin{keyword}
Prescribed-time consensus tracking;
Separation principle;
Denial-of-Service attacks;
Matrix pencil;
Resilient control.
\end{keyword}

\begin{abstract}
Despite the recent development of control theory for multi-agent systems (MASs), the highly desirable separation principle is difficult to establish even for linear MASs, let alone for nonlinear ones that rely solely on output measurements under denial-of-service (DoS) attacks. This paper establishes a separation principle for distributed leader-following control of this class of nonlinear MASs, allowing the observer and the controller to be designed independently. For each agent, two parametric Lyapunov equations (PLEs) are employed to generate two symmetric positive-definite matrices, which respectively support the independent design of the controller gain and the observer gain. To ensure that these two parameters do not affect each other, we adopt a matrix pencil formulation to decouple the relevant coupled terms and exploit time-varying feedback to handle potential impacts arising from nonlinearities. Furthermore, we design a hybrid observer that consists of a local state observer for reconstructing unmeasurable follower states and a distributed leader state observer for estimating the inaccessible leader state. Notably, we find that as long as the nonlinearity of all agents satisfies a linear-growth-type condition and the nonlinear model of the leader is available for followers, the separation principle can be established regardless of the presence of event-triggered control and/or admissible DoS attacks. In our method, the selection of design parameters for each agent is elegantly simple, involving only three parameters: one for the prescribed convergence time $t_f$, and the other two for the controller and the hybrid observer, respectively. Moreover, the latter two parameters can be chosen independently from explicit admissible ranges once the system order is specified. Numerical simulations verify the effectiveness of the proposed method.
\end{abstract}

\end{frontmatter}

\section{Introduction}

Output-feedback control is a fundamental problem in control theory since full-state measurements are often unavailable or prohibitively expensive in practical engineering systems. This problem warrants even greater attention in multi-agent systems (MASs), where large-scale networked interactions impose additional sensing, communication, and computation burdens. Among various output-feedback control methodologies, the separation principle is particularly desirable, since it allows the observer and controller to be designed independently and enables the output-feedback controller to be constructed by replacing the unavailable states in a state-feedback law with their estimates. Owing to this modular feature, separation-principle-based designs have been extensively studied for linear systems (\cite{ghodrat2021separation}), linear MASs (\cite{su2021output}), and have also been extended to the internal model based output regulation problem (\cite{maggiore2005output}).

While the separation principle has been widely studied for classical control settings, its extension to prescribed-time control is considerably more challenging,
since both the observer and the controller involve time-varying gains, and this difficulty is further compounded by the fact that the underlying system is not only nonlinear but also time-varying. For single-agent linear systems, prescribed-time separation principle results have been developed in \cite{holloway2019prescribed} by imposing that the scaling power of the time-varying observer gains to exceed that of the controller gains by at least twice the system order. For MASs,  however, analogous separation principle results are still largely absent. The work in \cite{tran2021finite} addresses prescribed-time control of leaderless MASs using a time transformation method. Nevertheless, this result remains confined to linear systems and does not readily extend to nonlinear or more general multi-agent architectures.

The main obstacles to establishing such a separation principle for nonlinear MASs with a leader-following architecture lie in two aspects. The first challenge stems from the fact that certain followers lack direct access to the leader's information (\cite{liu2025distributed,li2020distributed,liu2026attacks}). Consequently, followers must not only design local observers to estimate their own unmeasurable states but also rely on distributed observers to reconstruct the leader's state using only neighboring information. This necessity introduces additional coupling terms into the closed-loop system, which in turn undermines the independence of parameter selection. Although various types of distributed observers, such as distributed leader-state observers and distributed relative-state observers (\cite{ref3,zhang2022global}), have been developed for leader-following problems, most of them are designed under state-feedback control. Consequently, they are not directly applicable to the output-feedback control for nonlinear MASs, where only output measurements are available and one must simultaneously reconstruct follower states and estimate the leader state.

The second aspect lies in the treatment of nonlinearities in stability analysis. Most existing methods for handling coupled nonlinear terms, either directly or indirectly, alter the control structure or parameter selection, thereby rendering the separation principle inapplicable. For example, robust treatments based on Young's inequality require incorporating additional negative feedback terms into the control law to counteract state-dependent effects (\cite{li2020distributed}). Similarly, small-gain conditions impose additional constraints on parameter selection (\cite{jin2022leader}). A notable exception is the work of \cite{atassi1999separation}, which established a separation principle for a class of nonlinear systems using high-gain observers. By increasing the observer gain $\epsilon$, the trajectories under output feedback can be made close to those under state feedback, thereby ensuring the separation principle. More recently, \cite{ref4} successfully established a separation principle for a class of nonlinear systems via a PLE-based output feedback controller, revealing that adding an extra design degree of freedom to the matrix pencil formulation developed in \cite{ref-NYU-SCL,ref7} enables the decoupling of coupled terms in the controller and observer, while the blow-up time-varying gains are capable of dominating nonlinear coupling terms, thereby paving the way for establishing the separation principle.

Motivated by the above observations, this paper aims to design a distributed prescribed-time controller and establish the separation principle for a class of leader-following multi-agent systems. First, a hybrid observer comprising a distributed observer and a local observer is constructed: the former estimates the leader's state for followers that cannot communicate directly with the leader, while the latter reconstructs unmeasurable local states for all agents. Notably, a key to the success of the separation principle lies in the ability of the proposed local observer to access the nonlinear model of the leader.   Second, a matrix pencil formulation with an increased design degree-of-freedom is developed to decouple the coupled terms, ensuring that the design parameters of the controller and the observer do not interfere with each other. The resulting control framework involves only three design parameters: the prescribed convergence time, the observer gain, and the controller gain, where the latter two parameters can be obtained from a numerical table that depends solely on the system order. We further investigate the feasibility of the proposed scheme under event-triggered communication and DoS attacks, which is necessary and important in the study of networked systems (\cite{deng2020dynamic,deng2022resilient,yan2025event,liu2026attacks}). We find that if the triggering condition is designed using only controller parameters, and the  DoS-induced switching terms  are incorporated into the resilient matrix-pencil formulation while a mild constraint is imposed on the choice of the prescribed time, then the separation principle still holds for both scenarios.

The main contributions are summarized as follows:
\begin{enumerate}
    \item[$i)$] A separation principle is established for the proposed output-feedback control framework, enabling the observer and controller to be designed independently. As a result, the parameters of the two modules can be selected separately from explicit admissible ranges, without iterative trial-and-error, even in the presence of intermittent communication and DoS attacks.
    \item[$ii)$]A hybrid observer is proposed to address the concurrent unavailability of leader-state access and follower-state measurements for nonlinear leader-following MASs. By synergizing a distributed observer with a high-gain local observer, it enables each follower to retrieve the requisite feedback information using only output measurements and neighbor information. 
    \item[$iii)$] A resilient matrix-pencil-based analysis is proposed, which integrates the coupled terms arising from DoS-induced switching terms, nonlinearities, and both distributed and local observation errors into several generalized eigenvalue solving problems. Through this formulation, explicit relationships are derived between the admissible DoS duration and the extended prescribed time. As an additional benefit, the conservatism in parameter selection is notably reduced.
    \item[$iv)$] Compared with the result in \cite{ref-Fenggang-Auto}, which considers linear
MASs and adopts a full-state feedback design, the proposed method addresses
resilient output-feedback consensus tracking for high-order nonlinear MASs
with unknown growth rates under intermittent communication interruptions,
thereby offering broader applicability to systems with nonlinear dynamics and
limited state measurements.
\end{enumerate}

\textit{Notations:} Let $\mathbb{R}$, $\mathbb{R}^n$, and $\mathbb{R}^{n\times m}$ denote the sets of real numbers, $n$-dimensional real vectors, and $n\times m$ real matrices, respectively. The symbol $\mathbf{0}$ denotes a zero vector or matrix with compatible dimensions, and $\mathbf{I}_n$ denotes the $n\times n$ identity matrix. For scalars or matrices $q_1,\ldots,q_n$, $\operatorname{diag}\{q_1,\ldots,q_n\}$ denotes the corresponding diagonal or block-diagonal matrix. For a vector $x=[x_1,\ldots,x_n]^\top$, $x^\top$ denotes its transpose, $|x|$ its Euclidean norm, and $|x|_e=[|x_1|,\ldots,|x_n|]^\top$. The notation $|q|_e\le_e |x|_e$ means that the corresponding componentwise inequalities hold. For a square matrix $\mathbf{Q}$, $\det(\mathbf{Q})$ denotes its determinant, $|\mathbf{Q}|_e$ denotes its entrywise absolute value, and $\operatorname{diag}\{\mathbf{Q}\}$ denotes the diagonal matrix formed by the diagonal entries of $\mathbf{Q}$.

\section{Preliminaries}

\subsection{Properties of matrix pencils and PLEs}
A useful lemma is introduced for the matrix pencil formulation here:
\begin{lemma}\label{lemma_matrix pencil}
 (\cite{ref7}) Given real square matrices $\mathbf{Q}_1$ and $\mathbf{Q}_2$, the generalized eigenvalues of the
matrix pencil $\mathbf{Q}_1 - s \mathbf{Q}_2$ are defined as the values of $s$ that make
$\det(\mathbf{Q}_1 - s \mathbf{Q}_2) = 0$. The set of generalized eigenvalues of the matrix pencil
$\mathbf{Q}_1 - s \mathbf{Q}_2$ is denoted as $\sigma(\mathbf{Q}_1, \mathbf{Q}_2)$. Generalized eigenvalues have
several useful almost self-evident properties such as the following:
\begin{enumerate}
\item $s \mathbf{Q}_1 + \mathbf{Q}_2 < 0$ holds for all
$s > \max\{\sigma(\mathbf{Q}_2,-\mathbf{Q}_1)\}$, if $\mathbf{Q}_2$ is symmetric and $\mathbf{Q}_1$ is symmetric
negative-definite (SND);
\item $s \mathbf{Q}_1 - \mathbf{Q}_2 < 0$ holds for all
$s < \min\{\sigma(\mathbf{Q}_2,\mathbf{Q}_1)\}$, if $\mathbf{Q}_2$ and $\mathbf{Q}_1$ are symmetric positive-definite
(SPD).
\end{enumerate}
\end{lemma}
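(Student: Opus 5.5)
Both items reduce to a congruence argument that turns the pencil into an ordinary symmetric eigenvalue problem.

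\emph{Item 1.} Here $\mathbf{Q}_1$ is SND, so $-\mathbf{Q}_1$ is SPD and has a Cholesky factorization $-\mathbf{Q}_1=\mathbf{L}\mathbf{L}^\top$ with $\mathbf{L}$ invertible. Then
\[
s\mathbf{Q}_1+\mathbf{Q}_2=\mathbf{L}\bigl(\mathbf{L}^{-1}\mathbf{Q}_2\mathbf{L}^{-\top}-s\mathbf{I}\bigr)\mathbf{L}^\top .
\]
Since congruence preserves definiteness, $s\mathbf{Q}_1+\mathbf{Q}_2<0$ holds if and only if $\mathbf{M}:=\mathbf{L}^{-1}\mathbf{Q}_2\mathbf{L}^{-\top}-s\mathbf{I}<0$. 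The matrix $\mathbf{M}$ is symmetric, so this holds exactly when $s>\lambda_{\max}(\mathbf{L}^{-1}\mathbf{Q}_2\mathbf{L}^{-\top})$.

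It remains to identify this spectrum with $\sigma(\mathbf{Q}_2,-\mathbf{Q}_1)$. We have
\[
\det(\mathbf{Q}_2-s(-\mathbf{Q}_1))=\det(\mathbf{L})^2\det(\mathbf{L}^{-1}\mathbf{Q}_2\mathbf{L}^{-\top}-s\mathbf{I}),
\]
and $\det(\mathbf{L})^2\neq 0$. Hence the two sets coincide, and they are real because the reduced matrix is symmetric. This also shows the pencil is regular with exactly $n$ real generalized eigenvalues, so the maximum is well defined.

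\emph{Item 2.} Both $\mathbf{Q}_1$ and $\mathbf{Q}_2$ are SPD; factor $\mathbf{Q}_1=\mathbf{L}\mathbf{L}^\top$. Then
\[
s\mathbf{Q}_1-\mathbf{Q}_2=\mathbf{L}\bigl(s\mathbf{I}-\mathbf{L}^{-1}\mathbf{Q}_2\mathbf{L}^{-\top}\bigr)\mathbf{L}^\top,
\]
which is negative definite if and only if $s<\lambda_{\min}(\mathbf{L}^{-1}\mathbf{Q}_2\mathbf{L}^{-\top})$. The same determinant identity gives $\sigma(\mathbf{Q}_2,\mathbf{Q}_1)=\sigma(\mathbf{L}^{-1}\mathbf{Q}_2\mathbf{L}^{-\top})$. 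These eigenvalues are in fact positive, since the reduced matrix is SPD, though the statement does not require this.

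The only point needing care is the realness and finiteness of the generalized spectrum, so that $\max$ and $\min$ make sense. This follows from the definiteness of $\mathbf{Q}_1$, which gives the invertible congruence above. Everything else is routine. Alternatively, a Rayleigh-quotient argument works equally well. For item 1, write
\[
x^\top(s\mathbf{Q}_1+\mathbf{Q}_2)x=x^\top(-\mathbf{Q}_1)x\,\bigl(r(x)-s\bigr),\qquad r(x)=\frac{x^\top\mathbf{Q}_2x}{x^\top(-\mathbf{Q}_1)x}.
\]
The supremum of $r$ over $x\neq 0$ equals the largest generalized eigenvalue, so the form is negative for every $x\neq 0$ exactly when $s$ exceeds that eigenvalue.
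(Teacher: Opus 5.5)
Your proof is correct. The paper gives no proof of its own to compare with: it quotes the lemma from the cited reference and calls the properties ``almost self-evident.'' You supply the standard argument. You reduce the pencil by the invertible congruence $\mathbf{L}^{-1}(\cdot)\mathbf{L}^{-\top}$, where $\mathbf{L}$ is the Cholesky factor of the definite matrix ($-\mathbf{Q}_1$ in item 1, $\mathbf{Q}_1$ in item 2). The determinant identity then identifies $\sigma(\mathbf{Q}_2,-\mathbf{Q}_1)$, respectively $\sigma(\mathbf{Q}_2,\mathbf{Q}_1)$, with the ordinary spectrum of the symmetric matrix $\mathbf{L}^{-1}\mathbf{Q}_2\mathbf{L}^{-\top}$.

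Compared with simply invoking the result, your proof buys three things:
\begin{enumerate}
\item It justifies that the $\max$ and $\min$ in the statement make sense: the pencil is regular and its spectrum is real and finite, which the statement takes for granted.
\item It shows that item 2 needs only symmetry of $\mathbf{Q}_2$; positive definiteness merely makes the generalized eigenvalues positive.
\item Because you prove an equivalence rather than a one-way implication, it shows the strict inequalities are sharp. At $s=\max\{\sigma(\mathbf{Q}_2,-\mathbf{Q}_1)\}$ the matrix $s\mathbf{Q}_1+\mathbf{Q}_2$ is negative semidefinite but singular, and the same holds for $s\mathbf{Q}_1-\mathbf{Q}_2$ at $s=\min\{\sigma(\mathbf{Q}_2,\mathbf{Q}_1)\}$.
\end{enumerate}

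The third point matters later in the paper. There $\kappa_i$, $s_i$ and $\delta_i^P$ are set exactly equal to the extreme generalized eigenvalue, while a strict matrix inequality is asserted. By your characterization only the non-strict version holds at those values. Fortunately, the non-strict version is all that the subsequent quadratic-form bounds use.
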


Consider the following PLE:
\begin{align}\label{eq:PLE1}
    \mathbf{A}^{\top} \mathbf{P}_i(r_i) + \mathbf{P}_i(r_i) \mathbf{A} - \mathbf{P}_i(r_i) B B^{\top} \mathbf{P}_i(r_i) = -r_i \mathbf{P}_i(r_i),
\end{align}
and its dual form
\begin{align}\label{eq:PLE2}
    \mathbf{A}\mathbf{Q}_i(\varrho_i) + \mathbf{Q}_i(\varrho_i)\mathbf{A}^{\top} - \mathbf{Q}_i(\varrho_i)C^{\top} C\mathbf{Q}_i(\varrho_i) = -\varrho_i \mathbf{Q}_i(\varrho_i),
\end{align}
where $r_i(t)$ and $\varrho_i(t)$ are time-varying parameters to be designed in \eqref{eq:r} and \eqref{eq:varrho}, $\mathbf{A}$, $B$ and $C$ are defined as
\begin{align}\label{eq:A B C}    
\mathbf{A} =
\begin{bmatrix}
\mathbf{0}_{n-1} & \mathbf{I}_{n-1} \\
0 & \mathbf{0}_{n-1}^{\top}
\end{bmatrix},
\quad
B =
\begin{bmatrix}
\mathbf{0}_{n-1} \\
1
\end{bmatrix},
\quad
C =
\begin{bmatrix}
1 \\
\mathbf{0}_{n-1}
\end{bmatrix}^{\top}.
\end{align}
Two lemmas regarding PLEs with the matrix pencil are introduced here.
\begin{lemma}\label{lemma dP_i}
 (\cite{ref4}) Let $(\mathbf{A},B)$ be given by \eqref{eq:A B C}. Then, the PLE \eqref{eq:PLE1} has a unique symmetric
positive definite (SPD) solution $\mathbf{P}_i(r)$ if and only if $r_i>0$. In this case, the
unique solution satisfies
\begin{equation}
\label{eq:4}
\mathbf{P}_i(r_i) = r_i \mathbf{L}_{ci} \mathbf{P}_{ni} \mathbf{L}_{ci},
\end{equation}
where $\mathbf{P}_{ni} = \mathbf{P}_i(1)$ is a constant SPD matrix and $\mathbf{L}_{ci} = \mathrm{diag}\{ r_i^{n-1}, r_i^{n-2}, \ldots, 1 \}$ is a time-varying diagonal matrix. Furthermore, we have
\begin{equation}
\begin{aligned}
&\frac{d\mathbf{P}_i}{dr_i} 
\leq (1+\xi_i)\frac{\mathbf{P}_i}{r_i}, \\
&\mathbf{P}_iBB^\top \mathbf{P}_i 
\leq \delta_i r_i \mathbf{P}_i,
\end{aligned}
\label{eq:5}
\end{equation}
where $\xi_i>0$ and $\delta_i>0$  are constants obtained by  
\begin{align}
\xi_i
&=\max\big\{\sigma(
\mathbf{E}_i\mathbf{P}_{ni}+\mathbf{P}_{ni}\mathbf{E}_i,
\mathbf{P}_{ni})\big\}, \nonumber\\
\delta_i
&=\max\big\{\sigma(
\mathbf{P}_{ni}\mathbf{L}_{ci}BB^{\top}\mathbf{L}_{ci}\mathbf{P}_{ni},
\mathbf{P}_{ni})\big\}.
\end{align}
with $\mathbf{E}_i = \text{diag}\{n-1, n-2, \ldots, 0\}$.
\end{lemma}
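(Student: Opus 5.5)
The plan is to reduce everything to the normalized equation with $r_i=1$ by a diagonal scaling, and to treat existence and uniqueness through the inverse matrix, which satisfies a Lyapunov equation. The scaling identities all come from the structure of $\mathbf{L}_{ci}=\mathrm{diag}\{r_i^{n-1},\ldots,1\}$ in \eqref{eq:A B C}.

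\textbf{Scaling identities.} A direct entrywise computation gives $(\mathbf{L}_{ci}\mathbf{A}\mathbf{L}_{ci}^{-1})_{j,j+1}=r_i^{n-j}r_i^{-(n-j-1)}=r_i$. Hence
\[
\mathbf{L}_{ci}\mathbf{A}\mathbf{L}_{ci}^{-1}=r_i\mathbf{A},\qquad \mathbf{L}_{ci}^{-1}\mathbf{A}^\top\mathbf{L}_{ci}=r_i\mathbf{A}^\top,\qquad \mathbf{L}_{ci}B=B .
\]
Substitute $\mathbf{P}_i=r_i\mathbf{L}_{ci}\mathbf{P}_{ni}\mathbf{L}_{ci}$ into \eqref{eq:PLE1}. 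Each of the four terms becomes $r_i^2\mathbf{L}_{ci}(\cdot)\mathbf{L}_{ci}$, where $(\cdot)$ is the corresponding term of the PLE with $r_i=1$. So \eqref{eq:PLE1} is equivalent, by congruence with the invertible $\mathbf{L}_{ci}$, to the normalized PLE for $\mathbf{P}_{ni}$. This gives \eqref{eq:4} once existence and uniqueness are settled, and it transfers them from $r_i=1$ to every $r_i>0$.

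\textbf{Existence, uniqueness and necessity.} For an SPD solution $\mathbf{P}$, set $\mathbf{W}=\mathbf{P}^{-1}$. Pre- and post-multiplying \eqref{eq:PLE1} by $\mathbf{W}$ shows that \eqref{eq:PLE1} is equivalent to
\[
(\mathbf{A}+\tfrac{r_i}{2}\mathbf{I})\mathbf{W}+\mathbf{W}(\mathbf{A}+\tfrac{r_i}{2}\mathbf{I})^\top=BB^\top .
\]
\begin{itemize}
\item \emph{Existence and uniqueness for $r_i>0$.} The matrix $-(\mathbf{A}+\tfrac{r_i}{2}\mathbf{I})$ is Hurwitz, since all eigenvalues of $\mathbf{A}$ are $0$. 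So this Lyapunov equation has the unique solution $\mathbf{W}=\int_0^\infty e^{-(\mathbf{A}+\frac{r_i}{2}\mathbf{I})s}BB^\top e^{-(\mathbf{A}+\frac{r_i}{2}\mathbf{I})^\top s}ds$. This $\mathbf{W}$ is SPD because $(\mathbf{A},B)$ is controllable. Taking $\mathbf{P}_i=\mathbf{W}^{-1}$ gives existence, and uniqueness among SPD solutions follows because any SPD solution inverts to this unique $\mathbf{W}$.
\item \emph{Necessity of $r_i>0$.} Take the left null vector $v=e_n$ of $\mathbf{A}$, for which $v^\top\mathbf{A}=0$ and $v^\top B=1$. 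Multiplying the Lyapunov equation by $v^\top$ on the left and $v$ on the right gives $r_i\,v^\top\mathbf{W}v=1$. Since $\mathbf{W}>0$, this forces $r_i>0$.
\end{itemize}

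\textbf{Derivative bound.} Since $d\mathbf{L}_{ci}/dr_i=r_i^{-1}\mathbf{E}_i\mathbf{L}_{ci}$ and diagonal matrices commute, differentiating \eqref{eq:4} gives
\[
\frac{d\mathbf{P}_i}{dr_i}=\frac{\mathbf{P}_i}{r_i}+\mathbf{L}_{ci}(\mathbf{E}_i\mathbf{P}_{ni}+\mathbf{P}_{ni}\mathbf{E}_i)\mathbf{L}_{ci}.
\]
By the definition of $\xi_i$ as the largest generalized eigenvalue of the pencil with SPD $\mathbf{P}_{ni}$, we have $\mathbf{E}_i\mathbf{P}_{ni}+\mathbf{P}_{ni}\mathbf{E}_i\le\xi_i\mathbf{P}_{ni}$. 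This is the symmetric pencil fact underlying Lemma~\ref{lemma_matrix pencil}, namely that $\mathbf{P}_{ni}^{-1/2}(\cdot)\mathbf{P}_{ni}^{-1/2}\le\xi_i\mathbf{I}$. Congruence by $\mathbf{L}_{ci}$ then gives $\mathbf{L}_{ci}(\cdot)\mathbf{L}_{ci}\le\xi_i\mathbf{P}_i/r_i$, which is the first inequality in \eqref{eq:5}.

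\textbf{Second inequality.} Using $\mathbf{L}_{ci}B=B$,
\[
\mathbf{P}_iBB^\top\mathbf{P}_i=r_i^2\mathbf{L}_{ci}\mathbf{P}_{ni}BB^\top\mathbf{P}_{ni}\mathbf{L}_{ci}.
\]
The stated $\delta_i$ equals the largest generalized eigenvalue of $(\mathbf{P}_{ni}BB^\top\mathbf{P}_{ni},\mathbf{P}_{ni})$, again because $\mathbf{L}_{ci}B=B$, so it is a constant. Hence $\mathbf{P}_{ni}BB^\top\mathbf{P}_{ni}\le\delta_i\mathbf{P}_{ni}$. Congruence by $\mathbf{L}_{ci}$ and multiplication by $r_i^2$ yield $\mathbf{P}_iBB^\top\mathbf{P}_i\le\delta_i r_i\mathbf{P}_i$.

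\textbf{Positivity of the constants.} $\delta_i>0$ holds since the matrix $\mathbf{P}_{ni}BB^\top\mathbf{P}_{ni}$ is nonzero and PSD. For $\xi_i>0$, note $\operatorname{tr}(\mathbf{P}_{ni}^{-1/2}(\mathbf{E}_i\mathbf{P}_{ni}+\mathbf{P}_{ni}\mathbf{E}_i)\mathbf{P}_{ni}^{-1/2})=2\operatorname{tr}\mathbf{E}_i>0$ for $n\ge2$, so the largest eigenvalue is positive.

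The main obstacle is the uniqueness and necessity part, which needs the inverse-Lyapunov reformulation. The rest is bookkeeping through the scaling identity.
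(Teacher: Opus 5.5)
Your proof is correct. The paper gives no argument for this lemma and simply imports it from \cite{ref4}. What you wrote is the standard route of that PLE literature:
\begin{itemize}
\item existence, uniqueness and necessity of $r_i>0$ via the inverse $\mathbf{W}=\mathbf{P}_i^{-1}$, which satisfies $(\mathbf{A}+\tfrac{r_i}{2}\mathbf{I})\mathbf{W}+\mathbf{W}(\mathbf{A}+\tfrac{r_i}{2}\mathbf{I})^{\top}=BB^{\top}$;
\item the closed form from $\mathbf{L}_{ci}\mathbf{A}\mathbf{L}_{ci}^{-1}=r_i\mathbf{A}$ and $\mathbf{L}_{ci}B=B$;
\item both inequalities in \eqref{eq:5} by congruence of constant generalized-eigenvalue bounds.
\end{itemize}

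Your self-contained version also makes explicit two points the cited statement obscures. First, because $\mathbf{L}_{ci}B=B$, the $\delta_i$ in the statement is genuinely constant despite being written with $\mathbf{L}_{ci}$; it equals $B^{\top}\mathbf{P}_{ni}B$. Second, the claim $\xi_i>0$ holds only for $n\ge 2$. For $n=1$ one has $\mathbf{E}_i=0$ and hence $\xi_i=0$, which matches the first row of Table~\ref{tab:parameter_ranges}. Your restriction to $n\ge 2$ in the positivity step is therefore a needed correction to the statement, not a gap in your argument.
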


\begin{lemma}\label{Lemma dQ}
(\cite{ref6}) Let $(\mathbf{A},C)$ be given by \eqref{eq:A B C}. Then the PLE \eqref{eq:PLE2} has a unique symmetric
positive definite solution $\mathbf{Q}_i(\varrho_i)$ if and only if $\varrho_i > 0$. In this
case, the unique solution satisfies
\begin{equation}
\mathbf{Q}_i(\varrho_i) = \varrho_i^{2n-1} \mathbf{L}_{oi}^{-1} \mathbf{Q}_{ni} \mathbf{L}_{oi}^{-1},
\label{eq:7}
\end{equation}
where $\mathbf{Q}_{ni} = \mathbf{Q}_i(1)$ and $\mathbf{L}_{oi} = \mathrm{diag}
\{\varrho_i^{\,n-1}, \varrho_i^{\,n-2}, \ldots, 1\}$.
In addition, it holds that
\begin{equation}
\frac{d\mathbf{Q}_i}{d\varrho_i} \geq \frac{\mathbf{Q}_i}{n\varrho_i}.
\label{eq:8}
\end{equation}
\end{lemma}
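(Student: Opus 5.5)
The plan is to reduce the time-varying PLE \eqref{eq:PLE2} to its constant version at $\varrho_i=1$ by a diagonal scaling, and then read off the derivative bound from the explicit formula.

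\textbf{Existence and uniqueness.} Write $\mathbf{Q}_i = \mathbf{Q}_i(\varrho_i)$. Since $(\mathbf{A},C)$ is observable, the pair $(\mathbf{A}^\top,C^\top)$ is controllable, so \eqref{eq:PLE2} is the dual of \eqref{eq:PLE1}. Assume $\mathbf{Q}_i$ is invertible and set $\mathbf{W}=\mathbf{Q}_i^{-1}$. Multiplying \eqref{eq:PLE2} on the left and on the right by $\mathbf{W}$ gives the Lyapunov equation
\begin{align*}
\mathbf{W}\left(\mathbf{A}-\tfrac{\varrho_i}{2}\mathbf{I}_n\right)+\left(\mathbf{A}-\tfrac{\varrho_i}{2}\mathbf{I}_n\right)^{\top}\mathbf{W}=-C^\top C .
\end{align*}
\begin{itemize}
\item If $\varrho_i>0$, then $-(\mathbf{A}-\tfrac{\varrho_i}{2}\mathbf{I}_n)$ is Hurwitz, since all its eigenvalues equal $-\varrho_i/2$.
\item The pair $(C,\mathbf{A}-\tfrac{\varrho_i}{2}\mathbf{I}_n)$ is still observable.
\item Hence the Lyapunov equation has a unique solution, and it is SPD; it is the observability Gramian $\int_0^\infty e^{-s(\mathbf{A}-\frac{\varrho_i}{2}\mathbf{I}_n)^\top}C^\top C e^{-s(\mathbf{A}-\frac{\varrho_i}{2}\mathbf{I}_n)}ds$.
\item Setting $\mathbf{Q}_i=\mathbf{W}^{-1}$ gives an SPD solution of \eqref{eq:PLE2}, and it is unique among SPD solutions.
\end{itemize}
Conversely, take the trace of \eqref{eq:PLE2} after multiplying by $\mathbf{Q}_i^{-1}$. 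This gives $2\,\mathrm{tr}(\mathbf{A})-C\mathbf{Q}_iC^\top=-n\varrho_i$. Since $\mathrm{tr}(\mathbf{A})=0$ and $C\mathbf{Q}_iC^\top>0$, we get $\varrho_i>0$.

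\textbf{Scaling formula \eqref{eq:7}.} The structure of $\mathbf{A}$, $C$ and $\mathbf{L}_{oi}$ gives three identities:
\begin{itemize}
\item $\mathbf{L}_{oi}\mathbf{A}\mathbf{L}_{oi}^{-1}=\varrho_i^{-1}\mathbf{A}$, since the $(k,k+1)$ entry scales by $\varrho_i^{n-k}/\varrho_i^{n-k-1}=\varrho_i$ in the inverse direction;
\item equivalently, $\mathbf{L}_{oi}^{-1}\mathbf{A}\mathbf{L}_{oi}=\varrho_i\mathbf{A}$;
\item $C\mathbf{L}_{oi}^{-1}=\varrho_i^{-(n-1)}C$.
\end{itemize}
Substitute the candidate $\mathbf{Q}_i=\varrho_i^{2n-1}\mathbf{L}_{oi}^{-1}\mathbf{Q}_{ni}\mathbf{L}_{oi}^{-1}$ into \eqref{eq:PLE2}.
\begin{itemize}
\item For the linear terms, $\mathbf{A}\mathbf{L}_{oi}^{-1}=\varrho_i\mathbf{L}_{oi}^{-1}\mathbf{A}$, so $\mathbf{A}\mathbf{Q}_i+\mathbf{Q}_i\mathbf{A}^\top=\varrho_i^{2n}\mathbf{L}_{oi}^{-1}(\mathbf{A}\mathbf{Q}_{ni}+\mathbf{Q}_{ni}\mathbf{A}^\top)\mathbf{L}_{oi}^{-1}$.
\item For the quadratic term, $\mathbf{Q}_iC^\top C\mathbf{Q}_i=\varrho_i^{4n-2}\varrho_i^{-2(n-1)}\mathbf{L}_{oi}^{-1}\mathbf{Q}_{ni}C^\top C\mathbf{Q}_{ni}\mathbf{L}_{oi}^{-1}=\varrho_i^{2n}\mathbf{L}_{oi}^{-1}(\cdot)\mathbf{L}_{oi}^{-1}$.
\item The right-hand side is $-\varrho_i\mathbf{Q}_i=-\varrho_i^{2n}\mathbf{L}_{oi}^{-1}\mathbf{Q}_{ni}\mathbf{L}_{oi}^{-1}$.
\end{itemize}
So \eqref{eq:PLE2} at $\varrho_i$ reduces to \eqref{eq:PLE2} at $\varrho_i=1$, which $\mathbf{Q}_{ni}$ satisfies. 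Uniqueness then yields \eqref{eq:7}.

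\textbf{Derivative bound \eqref{eq:8}.} Write $\mathbf{E}_i=\mathrm{diag}\{n-1,\dots,0\}$, so that $\mathbf{L}_{oi}^{-1}=\mathrm{diag}\{\varrho_i^{-(n-1)},\dots,1\}$ and $\frac{d}{d\varrho_i}\mathbf{L}_{oi}^{-1}=-\varrho_i^{-1}\mathbf{E}_i\mathbf{L}_{oi}^{-1}$. Differentiating \eqref{eq:7} gives
\begin{align*}
\frac{d\mathbf{Q}_i}{d\varrho_i}=\frac{1}{\varrho_i}\left[(2n-1)\mathbf{Q}_i-\mathbf{E}_i\mathbf{Q}_i-\mathbf{Q}_i\mathbf{E}_i\right].
\end{align*}
Set $\mathbf{D}=\tfrac{2n-1}{2}\mathbf{I}_n-\mathbf{E}_i$. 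Its diagonal entries are $k-\tfrac12$ for $k=1,\dots,n$, so $\mathbf{D}\ge\tfrac12\mathbf{I}_n$. We must show $\mathbf{D}\mathbf{Q}_i+\mathbf{Q}_i\mathbf{D}\ge\tfrac1n\mathbf{Q}_i$.

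This is the main obstacle: $\mathbf{D}\ge\tfrac12\mathbf{I}_n$ does not by itself give an ordering of $\mathbf{D}\mathbf{Q}+\mathbf{Q}\mathbf{D}$ relative to $\mathbf{Q}$. Two routes are available.
\begin{itemize}
\item \emph{Reduction to $\mathbf{Q}_{ni}$.} Since $\mathbf{L}_{oi}^{-1}$ is diagonal and commutes with $\mathbf{D}$, congruence by $\mathbf{L}_{oi}^{-1}$ reduces the claim to $\mathbf{D}\mathbf{Q}_{ni}+\mathbf{Q}_{ni}\mathbf{D}-\tfrac1n\mathbf{Q}_{ni}\ge0$. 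Equivalently, $\tfrac1n\le\min\sigma(\mathbf{D}\mathbf{Q}_{ni}+\mathbf{Q}_{ni}\mathbf{D},\mathbf{Q}_{ni})$, a statement that item 2 of Lemma~\ref{lemma_matrix pencil} can then use. Verifying this constant inequality requires the specific structure of $\mathbf{Q}_{ni}$, for instance its closed form built from binomial coefficients.
\item \emph{Integral representation.} This is the route I would try first. Use $\mathbf{Q}_i^{-1}=\int_0^\infty e^{-s\varrho_i}e^{-s\mathbf{A}^\top}C^\top Ce^{-s\mathbf{A}}ds$. Then $\frac{d\mathbf{Q}_i^{-1}}{d\varrho_i}=-\int_0^\infty s\,e^{-s\varrho_i}(\cdots)ds\le0$, which already shows that $\mathbf{Q}_i$ is increasing. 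The quantitative bound $\frac{d\mathbf{Q}_i}{d\varrho_i}\ge\frac{1}{n\varrho_i}\mathbf{Q}_i$ is equivalent to $\frac{d\mathbf{Q}_i^{-1}}{d\varrho_i}\le-\frac{1}{n\varrho_i}\mathbf{Q}_i^{-1}$. This in turn means $\int s\,e^{-s\varrho_i}(\cdot)\ge\frac{1}{n\varrho_i}\int e^{-s\varrho_i}(\cdot)$. I would try to derive that from $\mathrm{tr}(\mathbf{Q}_i^{-1}\frac{d\mathbf{Q}_i}{d\varrho_i})=n$, which follows from $\det\mathbf{Q}_i\propto\varrho_i^{n}$, combined with a Cauchy--Schwarz-type argument. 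Failing that, I would fall back to the constant matrix check on $\mathbf{Q}_{ni}$ from the first route.
\end{itemize}
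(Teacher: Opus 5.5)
The paper gives no proof of this lemma; it is quoted from the cited reference, so your argument has to stand on its own.

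\textbf{What is fine, modulo slips.} Existence and uniqueness, the trace argument for the converse, the scaling formula \eqref{eq:7}, and the identity $d\mathbf{Q}_i/d\varrho_i=\varrho_i^{-1}(\mathbf{D}\mathbf{Q}_i+\mathbf{Q}_i\mathbf{D})$ are correct up to two slips.
\begin{itemize}
\item Pre- and post-multiplying \eqref{eq:PLE2} by $\mathbf{W}$ gives $\mathbf{W}(-\mathbf{A}-\frac{\varrho_i}{2}\mathbf{I}_n)+(-\mathbf{A}-\frac{\varrho_i}{2}\mathbf{I}_n)^{\top}\mathbf{W}=-C^{\top}C$. As you wrote it, the matrix $-(\mathbf{A}-\frac{\varrho_i}{2}\mathbf{I}_n)$ has eigenvalues $+\varrho_i/2$ and your Gramian diverges. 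Your later formula $\mathbf{Q}_i^{-1}=\int_0^\infty e^{-s\varrho_i}e^{-s\mathbf{A}^{\top}}C^{\top}Ce^{-s\mathbf{A}}ds$ is the correct one.
\item $\mathbf{L}_{oi}\mathbf{A}\mathbf{L}_{oi}^{-1}=\varrho_i\mathbf{A}$, not $\varrho_i^{-1}\mathbf{A}$. The identity you actually use, $\mathbf{A}\mathbf{L}_{oi}^{-1}=\varrho_i\mathbf{L}_{oi}^{-1}\mathbf{A}$, is correct, so \eqref{eq:7} stands.
\end{itemize}

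\textbf{The gap.} The genuine gap is \eqref{eq:8}, the only nontrivial part of the lemma, which you leave open. Your first route (reduction to $\mathbf{Q}_{ni}$) is a valid reduction, but ``check the constant matrix'' is not an argument valid for every $n$. Your second route cannot work as planned, for two reasons.
\begin{itemize}
\item The trace identity is misstated. By \eqref{eq:7}, $\det\mathbf{Q}_i=\varrho_i^{n(2n-1)-n(n-1)}\det\mathbf{Q}_{ni}=\varrho_i^{n^2}\det\mathbf{Q}_{ni}$. Hence $\mathrm{tr}(\mathbf{Q}_i^{-1}\,d\mathbf{Q}_i/d\varrho_i)=n^2/\varrho_i$, consistent with $2\,\mathrm{tr}(\mathbf{D})=n^2$.
\item More decisively, this trace is the \emph{sum} of the generalized eigenvalues of $(d\mathbf{Q}_i/d\varrho_i,\mathbf{Q}_i)$. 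It bounds the largest one from above and gives no positive lower bound on the smallest, which is what \eqref{eq:8} asserts. For $n=2$ these eigenvalues are $(2\pm\sqrt{2})/\varrho_i$, and no Cauchy--Schwarz manipulation of their sum separates the smaller one from zero.
\end{itemize}

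\textbf{The missing idea.} Take the trace of the \emph{inverse} pencil, using the same trick you used for the converse. Differentiate \eqref{eq:PLE2} with respect to $\varrho_i$, and write $\mathbf{Q}_i'=d\mathbf{Q}_i/d\varrho_i$ and $\mathbf{A}_c=\mathbf{A}-\mathbf{Q}_iC^{\top}C+\frac{\varrho_i}{2}\mathbf{I}_n$. This gives
\[
\mathbf{A}_c\mathbf{Q}_i'+\mathbf{Q}_i'\mathbf{A}_c^{\top}=-\mathbf{Q}_i .
\]
From \eqref{eq:PLE2} itself, $\mathbf{Q}_i^{-1}\mathbf{A}_c+\mathbf{A}_c^{\top}\mathbf{Q}_i^{-1}=-C^{\top}C$. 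The pair $(C,\mathbf{A}_c)$ is observable, since observability is preserved under output injection. Hence $\mathbf{A}_c$ is Hurwitz and $\mathbf{Q}_i'>0$; alternatively, $\mathbf{Q}_i'=-\mathbf{Q}_i\,\frac{d\mathbf{Q}_i^{-1}}{d\varrho_i}\,\mathbf{Q}_i>0$ follows from your integral formula. Now multiply by $(\mathbf{Q}_i')^{-1}$, take traces, and use $C\mathbf{Q}_iC^{\top}=n\varrho_i$ from your converse step:
\[
\mathrm{tr}\big(\mathbf{Q}_i(\mathbf{Q}_i')^{-1}\big)=-2\,\mathrm{tr}(\mathbf{A}_c)=2C\mathbf{Q}_iC^{\top}-n\varrho_i=n\varrho_i .
\]
The generalized eigenvalues of $(\mathbf{Q}_i,\mathbf{Q}_i')$ are positive and sum to $n\varrho_i$, so each is at most $n\varrho_i$. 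That is, $\mathbf{Q}_i\le n\varrho_i\mathbf{Q}_i'$, which is \eqref{eq:8}.

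For context: via your integral representation and Gauss quadrature for the weight $e^{-\varrho_i s}$, the generalized eigenvalues of $(\varrho_i\mathbf{Q}_i',\mathbf{Q}_i)$ are the zeros $x_k$ of the Laguerre polynomial of degree $n$. The identity above then reads $\sum_k 1/x_k=n$. This is why $1/n$ is the natural constant, with equality only for $n=1$.
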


\subsection{System model}
This paper considers nonlinear MASs consisting of one leader and $N$ followers. The dynamic of the leader is
\begin{align}\label{eq:leader}
    &\Dot{x}_0 = \mathbf{A} x_0 + f_0(t,x_0), \nonumber\\
    &y_0 = C x_0,
\end{align}
where $x_0\in \mathbb{R}^n$ and $y_0\in \mathbb{R}$ denote the state and output, respectively. The system matrices $\mathbf{A}$, $B$ and $C$ are given in \eqref{eq:A B C}. The structure of the nonlinear function of $f_0(\cdot,t)$ is available to all followers. The dynamic of the follower is
\begin{align}\label{eq:system}
    &\Dot{x}_i = \mathbf{A} x_i + B u_i + f_i(t,x_i), \nonumber\\
    &y_i = C x_i, \quad i = 1,\ldots,N,
\end{align}
where $x_i\in \mathbb{R}^n$, $y_i\in \mathbb{R}$, $u_i\in \mathbb{R}$ denote the state, output, and control input, respectively. $f_i=[f_{i,1},\cdots,f_{i,n}]^\top \in \mathbb{R}^n$ denotes the nonlinear
function of follower $i$, which satisfies a Lipschitz condition with respect to
$x_i$ and is piecewise continuous with respect to $t$.
 For the closed-loop MASs, the following linear growth condition should be satisfied.
\begin{assumption}\label{ass:nonlinearity}
(\cite{ref-ye2025distributed}) There exist unknown constants $\theta_i$ and a known lower-triangular matrix $\mathbf{A}_f \in \mathbb{R}^{n \times n}$ such that
\begin{equation}\label{eq:lipschitz_phi}
\lvert f_i(x_i,t)-f_j(x_j,t) \rvert_e
\le_e
\theta_i \mathbf{A}_{f} \lvert x_i-x_j \rvert_e .
\end{equation}
where $\theta_i$ denotes the unknown growth rate, and $\mathbf{A}_{f}$ is
\begin{equation}\label{eq:Af}
\mathbf{A}_{f}
=
\begin{bmatrix}
1 & \cdots & 0 \\
\vdots & \ddots & \vdots \\
1 & \cdots & 1
\end{bmatrix}.
\end{equation}
\end{assumption}
\begin{remark}
Assumption \ref{ass:nonlinearity} is a commonly used growth condition in the
control of nonlinear MASs (\cite{zhang2015leader,li2021leader,ref-ye2025distributed}). Compared with
\cite{zhang2015leader,li2021leader}, this paper allows the nonlinear growth
rates to be unknown. Compared with \cite{ref-ye2025distributed}, where each
follower is assumed to have direct access to the leader information, this paper
considers a more general directed communication graph that only contains a
spanning tree rooted at the leader. 
\end{remark}

\subsection{Graph theory}

 Define $\mathcal{G}$ as a directed graph with a node set
$\mathcal{V} = \{0,1,\ldots,N\}$ and an edge set $\mathcal{E} \subseteq \mathcal{V} \times \mathcal{V}$.
If node $j$ communicates with node $i$ directly, one can get that $(j,i) \in \mathcal{E}$.
Define $\mathcal{A} = [a_{ij}] \in \mathbb{R}^{(N+1)\times(N+1)}$ as the adjacency matrix, where
$a_{ij} > 0$ if $(j,i) \in \mathcal{E}$ and $a_{ij} = 0$ otherwise. For agent $i$, $j$ is in the neighboring set $\mathcal{N}_i$ if $a_{ij} > 0$.
Then the Laplacian matrix is defined as
$
\mathcal{L} = \mathcal{H} - \mathcal{A},
$
where $\mathcal{H} = \mathrm{diag}\{\omega_0,\omega_1,\ldots,\omega_N\}$ and
$\omega_i = \sum_{j=0}^{N} a_{ij}$.
$\mathcal{L}$ can be written as
$ 
\mathcal{L} =
\begin{bmatrix}
\mathbf{0} & 0 \\
\mathcal{L}_1 & \mathcal{L}_2 
\end{bmatrix},
$
with $\mathcal{L}_1 \in \mathbb{R}^{N\times N}$ and $\mathcal{L}_2 \in \mathbb{R}^{N\times 1}$.
The following assumption for the directed graph is widely adopted in consensus tracking problems.
\begin{assumption}\label{Ass:Graph} (\cite{olfati2004consensus})
In the absence of DoS attacks, the directed graph $\mathcal{G}$ contains a directed spanning tree rooted at the leader.
\end{assumption}

The objective of this paper is to develop distributed control schemes for
the MAS \eqref{eq:leader}--\eqref{eq:system} when only the outputs of all agents are available,  such that the prescribed-time consensus tracking 
problem formulated in Definition 1 is solved with a user-prescribed time. In addition, we establish the separation principle for distributed control schemes, namely, the design parameters for the controllers and observers are independent of each other.

\begin{definition}
(\cite{ref-ye2025distributed}) The prescribed-time consensus tracking of the leader-following MAS
\eqref{eq:system} is said to be achieved if, for $i\in\{1,\ldots,N\}$,
there exist a class $\mathcal{KL}$ function $\beta$ and a time-scaling
function $\mu(t):[0,t_f)\to[0,\infty)$ such that $\mu(t)$ tends to
infinity as $t$ goes to $t_f$, and, for any
$x_i(0),x_0(0)\in\mathbb{R}^n$,
\begin{equation}
    |x_i(t)-x_0(t)|
    \le
    \beta\big(|x_i(0)-x_0(0)|,\mu(t)\big),
    \quad \forall t\in[0,t_f),
    \label{eq:prescribed_time_consensus}
\end{equation}
where $t_f>0$ is an arbitrary time-independent constant.
\end{definition}

\subsection{Prescribed-time state-feedback controller}
The prescribed-time state-feedback controller for follower $i$ in \eqref{eq:system} that has direct access to $x_0$, without an event-triggered mechanism is given as follows
\begin{align}\label{eq:u1}
    u_i = -\frac{1}{2} B^{\top}\mathbf{P}_i(r_i)(x_i - x_0),
\end{align}
where $r_i(t) = \gamma_{ci} \alpha(t)$ and the time scaling function is
\begin{equation}
    \alpha(t)=\frac{1}{t_f-t}, \quad t\in[0,t_f),
    \label{eq:alpha}
\end{equation}
which are widely adopted in \cite{orlov2022autonomous,orlov2022time,orlov2024scaling,efimov2026discretization,li2022prescribed,li2023prescribed,krishnamurthy2020dynamic}.
Inspired by the state-feedback control design in \cite{ref4}, a lemma is introduced as follows.
\begin{lemma}\label{lemma_state}
Consider a follower that has direct access to the leader state \(x_0\) in the MAS \eqref{eq:system}. With the state-feedback controller \eqref{eq:u1}, if \(\gamma_{ci}\in\Omega_{ci}\), where
\begin{align} \label{Omega_ic}
		\Omega_{ci}
		=
		\left\{
		\gamma_{ci} \in \mathbb{R}
		\,\middle|\,
		\gamma_{ci} > \max\{4+2\xi_i,\ t_f\}
		\right\},
\end{align}
then the prescribed-time consensus tracking objective is achieved in the sense of Definition~1. In addition, all closed-loop signals remain uniformly bounded.
\end{lemma}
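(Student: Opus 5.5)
We work in the tracking error $e_i=x_i-x_0$ and take the Lyapunov function $V_i=e_i^\top\mathbf{P}_i(r_i)e_i$ with $r_i=\gamma_{ci}\alpha(t)$. By \eqref{eq:leader}, \eqref{eq:system} and \eqref{eq:u1},
\begin{align*}
\dot e_i=\Big(\mathbf{A}-\tfrac12BB^\top\mathbf{P}_i\Big)e_i+\big(f_i(t,x_i)-f_0(t,x_0)\big).
\end{align*}
Using the PLE \eqref{eq:PLE1}, the linear part gives $e_i^\top(\mathbf{A}^\top\mathbf{P}_i+\mathbf{P}_i\mathbf{A}-\mathbf{P}_iBB^\top\mathbf{P}_i)e_i=-r_iV_i$. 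The time derivative of $\mathbf{P}_i$ contributes $\dot r_i\,e_i^\top\frac{d\mathbf{P}_i}{dr_i}e_i\le(1+\xi_i)\frac{\dot r_i}{r_i}V_i$ by \eqref{eq:5}. Here $\dot r_i/r_i=\alpha(t)=r_i/\gamma_{ci}$.

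The nonlinear term $2e_i^\top\mathbf{P}_i(f_i-f_0)$ is handled with Assumption~\ref{ass:nonlinearity} and the scaling \eqref{eq:4}. Set $\eta_i=\sqrt{r_i}\,\mathbf{L}_{ci}e_i$, so that $V_i=\eta_i^\top\mathbf{P}_{ni}\eta_i$. Then
\begin{align*}
|2e_i^\top\mathbf{P}_i(f_i-f_0)|\le 2\theta_i\,|\eta_i|_e^\top|\mathbf{P}_{ni}|_e\,\sqrt{r_i}\,\mathbf{L}_{ci}\mathbf{A}_f\mathbf{L}_{ci}^{-1}\frac{|\eta_i|_e}{\sqrt{r_i}}.
\end{align*}
Because $\mathbf{A}_f$ is lower triangular, $\mathbf{L}_{ci}\mathbf{A}_f\mathbf{L}_{ci}^{-1}$ has entries $r_i^{(n-k)-(n-j)}=r_i^{j-k}$ for $k\ge j$. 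These are nonpositive powers of $r_i$, so for $r_i\ge1$ the matrix is entrywise bounded by $\mathbf{A}_f$.

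The condition $\gamma_{ci}>t_f$ ensures $r_i(0)=\gamma_{ci}/t_f>1$, and $r_i$ is increasing, so $r_i\ge1$ for all $t$. Hence the nonlinear term is at most $c_i\theta_iV_i$. The constant $c_i$ is obtained from the generalized eigenvalue $\max\sigma(|\mathbf{P}_{ni}|_e\mathbf{A}_f+\mathbf{A}_f^\top|\mathbf{P}_{ni}|_e,\mathbf{P}_{ni})$ via Lemma~\ref{lemma_matrix pencil}. It is a bounded constant, whereas $r_i\to\infty$.

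Collecting the three contributions:
\begin{align*}
\dot V_i\le -r_i\Big(1-\frac{1+\xi_i}{\gamma_{ci}}\Big)V_i+c_i\theta_iV_i.
\end{align*}
Since $\gamma_{ci}>4+2\xi_i$, the bracket exceeds $1/2$ with margin, so $\dot V_i\le-\kappa_i r_iV_i+c_i\theta_iV_i$ for some $\kappa_i>0$ (indeed $\kappa_i\ge 1/2$). The term $c_i\theta_iV_i$ is dominated once $r_i\ge 2c_i\theta_i/\kappa_i$, which happens at a finite time $T_i<t_f$ because $\theta_i$ is unknown but finite. On $[0,T_i]$, $V_i$ grows at most exponentially, so it stays bounded by a multiple of $V_i(0)$.

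After $T_i$, integration gives
\begin{align*}
V_i(t)\le V_i(T_i)\Big(\frac{t_f-t}{t_f-T_i}\Big)^{\kappa_i\gamma_{ci}/2}.
\end{align*}
Since $\kappa_i\gamma_{ci}/2>1+\xi_i$ and more, this decay beats any polynomial growth. Using $\lambda_{\min}(\mathbf{P}_i)\ge r_i\lambda_{\min}(\mathbf{P}_{ni})$ (valid because $\mathbf{L}_{ci}\ge \mathbf{I}$ for $r_i\ge1$), we get $|e_i|^2\le V_i/(r_i\lambda_{\min}(\mathbf{P}_{ni}))$. This yields the $\mathcal{KL}$ bound of Definition~1 with $\mu=\alpha$.

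For boundedness of $u_i$, we bound $|B^\top\mathbf{P}_ie_i|^2\le\delta_ir_iV_i$ using \eqref{eq:5}. This vanishes because $V_i$ decays like $(t_f-t)^{\kappa_i\gamma_{ci}/2}$ with $\kappa_i\gamma_{ci}/2>1$. The condition $\gamma_{ci}>4+2\xi_i$ gives exactly enough margin to absorb this extra factor of $r_i$.

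The main obstacle is the nonlinear-term estimate: showing that the scaled coupling matrix $\mathbf{L}_{ci}\mathbf{A}_f\mathbf{L}_{ci}^{-1}$ stays bounded, which depends on both the triangular structure and $r_i\ge1$. It also requires turning the unknown $\theta_i$ into a finite-time transient rather than a constraint on $\gamma_{ci}$.
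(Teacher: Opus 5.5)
Your proof is correct in substance and follows the paper's route. The paper defers to the PLE-based analysis of its reference, and the same steps reappear in Step 2 of the proof of Theorem 1:
\begin{itemize}
\item cancel the linear part with the PLE;
\item bound $d\mathbf{P}_i/dr_i$ by Lemma 2;
\item use $r_i\ge 1$ to get $\mathbf{L}_{ci}\mathbf{A}_f\mathbf{L}_{ci}^{-1}\le_e\mathbf{A}_f$ (this is exactly why $t_f$ enters $\Omega_{ci}$);
\item bound the nonlinearity by a constant times the Lyapunov function;
\item integrate to a power of $t_f-t$.
\end{itemize}
Using $e_i^\top\mathbf{P}_ie_i$ instead of $r_i\tilde x_i^\top\mathbf{P}_i\tilde x_i$, and splitting at $T_i$ instead of absorbing the constant-rate term into a factor $e^{s_it_f}$, are cosmetic differences.

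\textbf{The nonlinear-term constant needs repair.} Let $\mathbf{S}_i:=|\mathbf{P}_{ni}|_e\mathbf{A}_f+\mathbf{A}_f^\top|\mathbf{P}_{ni}|_e$ and $c_i=\max\sigma(\mathbf{S}_i,\mathbf{P}_{ni})$. This generalized eigenvalue only gives $|\eta_i|_e^\top\mathbf{S}_i|\eta_i|_e\le c_i\,|\eta_i|_e^\top\mathbf{P}_{ni}|\eta_i|_e$. But $|\eta_i|_e^\top\mathbf{P}_{ni}|\eta_i|_e$ is not $V_i=\eta_i^\top\mathbf{P}_{ni}\eta_i$, because $\mathbf{P}_{ni}$ is not diagonal.

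Concretely, take $n=2$. Then $\mathbf{P}_{ni}=\begin{bmatrix}1&1\\1&2\end{bmatrix}$, $\mathbf{S}_i$ has all entries equal to $4$, and $c_i=4$. At $\eta_i=(1,-1)^\top$ you get $|\eta_i|_e^\top\mathbf{S}_i|\eta_i|_e=16$, while $c_iV_i=4$, so your stated estimate does not follow.

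Since only the existence of some finite constant matters, either of the following fixes the step:
\begin{itemize}
\item Use $\lambda_{\max}(\mathbf{S}_i)/\lambda_{\min}(\mathbf{P}_{ni})$, noting that $|\eta_i|_e$ and $\eta_i$ have the same Euclidean norm.
\item Compare against the diagonal part of $\mathbf{P}_{ni}$, as the paper does via $\delta_i^P$.
\end{itemize}

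\textbf{A false aside.} Your remark $\kappa_i\gamma_{ci}/2>1+\xi_i$ need not hold when $\xi_i>1$; for example $n=2$ with $\gamma_{ci}$ just above $4+2\xi_i$. It is never used, however. The bound on $u_i$ only needs $\kappa_i\gamma_{ci}/2=(\gamma_{ci}-1-\xi_i)/2>1$, and this does hold on $\Omega_{ci}$.
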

\begin{proof}
The result can be obtained by following the same line of reasoning as in Section~3.1 of \cite{ref4}, with the state variable \(x\) replaced by the tracking error \(x_i-x_0\). 
\end{proof}
\begin{remark}
    In \cite{ref4}, the admissible range is given by
\[
\Omega_{ci}
=
\left\{
\gamma_{ci} \in \mathbb{R}
\,\middle|\,
\gamma_{ci} > 4+2\xi_i
\right\}.
\]
Here, we use a slightly stronger condition \(\gamma_{ci}>\max\{4+2\xi_i,t_f\}\), which ensures \(r_i(t) \ge 1\) on the prescribed-time interval and facilitates the subsequent analysis. When \(t_f<4+2\xi_i\), this condition reduces to the admissible range used in \cite{ref4}; see Table~\ref{tab:parameter_ranges}.
\end{remark}

The above result follows the standard route for establishing a separation principle, as in \cite{holloway2019prescribed,tran2021finite,ref4}: a state-feedback controller is first developed, and the corresponding output-feedback controller is obtained by replacing the unavailable state with its estimate. Consequently, for the nonlinear MASs, a hybrid observer is introduced in this paper to reconstruct the follower state and estimate the leader state simultaneously, so that the tracking error can be generated from two estimated variables and the resulting controller can be implemented by any follower over a directed graph under Assumption \ref{Ass:Graph}.

\section{Prescribed-time event-triggered consensus tracking via a separation principle}
Building on the preliminary results, this section further establishes the separation principle for output-feedback event-triggered consensus tracking of nonlinear MASs in the absence of DoS attacks. Before presenting the observer and controller designs, we first provide the overall control block diagram in Fig.~\ref{fig:control_scheme}.

\begin{figure}[!t]
    \centering
    \includegraphics[width=2.4in]{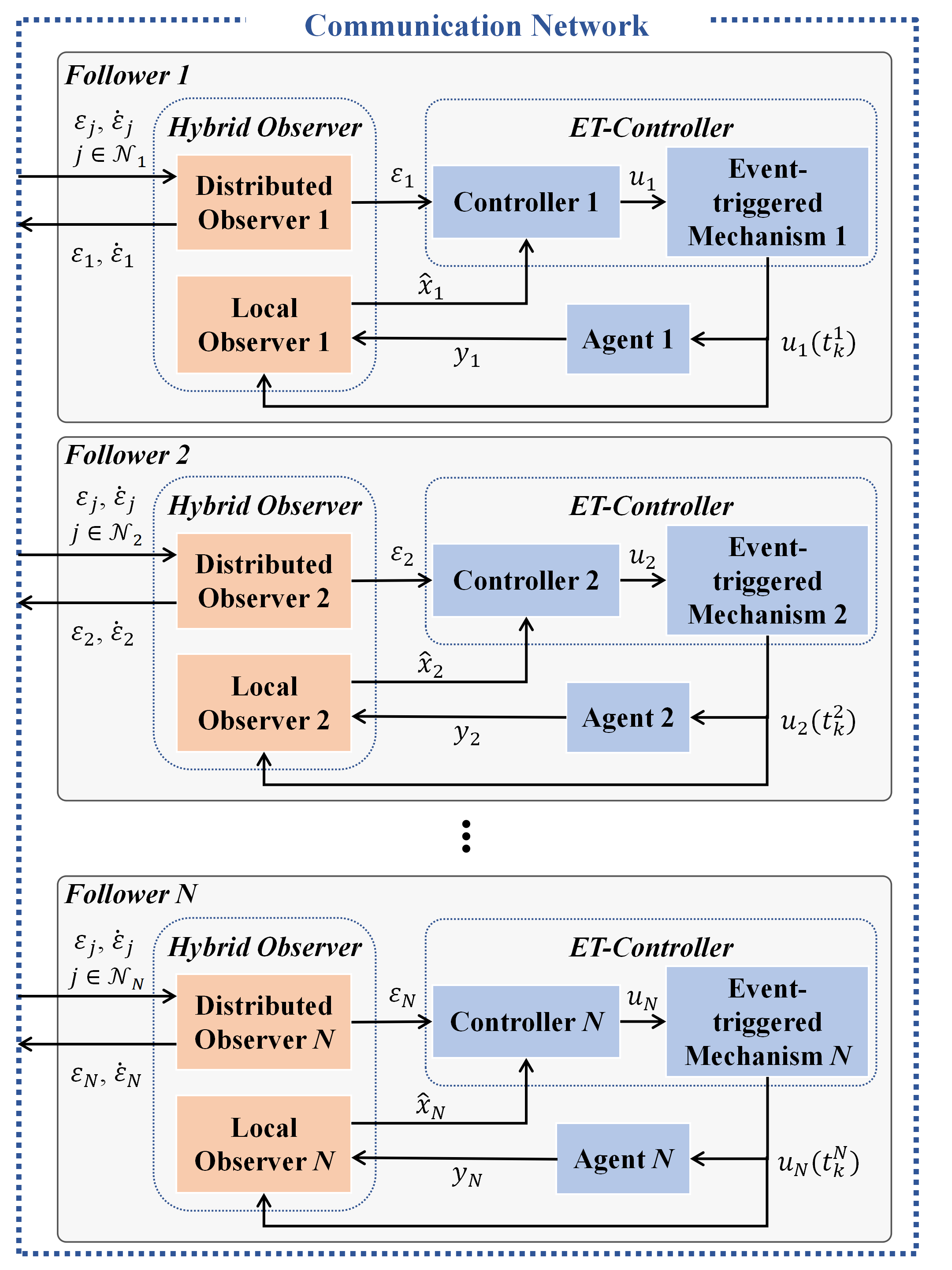}
    \caption{Schematic diagram of the proposed distributed event-triggered output-feedback control scheme without DoS attacks.}
    \label{fig:control_scheme}
\end{figure}

\subsection{Hybrid observer design}
A hybrid observer framework is developed by integrating a distributed observer
with a local observer, where the former estimates the leader state through
local information exchange over the directed graph and the latter reconstructs
the unmeasurable follower state.

\subsubsection{Distributed observer}
Based on Assumption \ref{Ass:Graph}, a subset of agents cannot access the information of the leader directly. To this end, distributed observers are utilized to observe the state of the leader. Inspired by \cite{ref3}, a distributed observer is designed as follows
\begin{align} \label{dis_observer}
\dot{\varepsilon}_i
=
\frac{1}{\sum_{j=0}^{N} a_{ij}}
\sum_{j=0}^{N} a_{ij} \dot{\varepsilon}_j
-
\frac{\varrho_i}{2}
\frac{1}{\sum_{j=0}^{N} a_{ij}}
\sum_{j=0}^{N} a_{ij}
\big( \varepsilon_i - \varepsilon_j \big),
\end{align}
where $\varepsilon_i$ represents agent $i$'s local estimate of the leader's state,   $\varepsilon_0(t) = x_0(t)$, and  $\varrho_i$ is defined in \eqref{eq:varrho}. Let  $\zeta_i(t) = \varepsilon_i(t) - x_0(t)$ be the distributed observation error, and    $\eta_i = \sum_{j=0}^{N} a_{ij}(\varepsilon_i - \varepsilon_j)$ be the local consensus error for agent $i$. Then the stack vectors for followers are given by: $\varepsilon = [\varepsilon_1^\top,\cdots,\varepsilon_N^\top]^\top$, $\zeta = [\zeta_1^\top,\cdots,\zeta_N^\top]^\top$ and $\eta = [\eta_1^\top,\cdots,\eta_N^\top]^\top$. According to Assumption~\ref{Ass:Graph}, we know that $\mathcal{L}_1$ is nonsingular, then  the compact-form of $\zeta$ can be formulated as $\zeta = (\mathcal{L}_1^{-1}\otimes I_n)\,\eta$.
Since all followers have the same system order and share the same prescribed
time $t_f$, the observer design coefficients can be consistently selected from
Table~\ref{tab:parameter_ranges}. Together with the common time-scaling
function $\alpha(t)$, we restrict
$\varrho_1(t)=\cdots=\varrho_N(t)$ for this paper. Therefore, it follows from
\eqref{dis_observer} and the definition of $\eta$ that $\dot{\eta}=-\frac{\varrho_i}{2}\eta$, which further yields
\begin{align}\label{eq:dev zeta}
    \dot{\zeta}_i=-\frac{\varrho_i}{2}\zeta_i.
\end{align}

\subsubsection{Local observer}
For system \eqref{eq:system} with nonlinearities, the local observer is designed by incorporating the available nonlinear model of the leader evaluated at $\varepsilon_i$, yielding
\begin{align}
    &\Dot{\hat{x}}_i = \mathbf{A} \hat{x}_i + B u_i + \frac{1}{2} \mathbf{Q}_i(\varrho_i) C^{\top}( y_i - C\hat{x}_i) + f_0(\varepsilon_i,t), \label{eq:local observer} \\ 
    & \varrho_i(t)=\gamma_{oi}\alpha(t), \label{eq:varrho}
\end{align}
where $\varepsilon_i$ is the local estimate of the leader's state in \eqref{dis_observer}. The admissible range of $\gamma_{oi}$ is
\begin{align}
    \Omega_{oi} = \left\{\gamma_{oi} \in \mathbb{R} | \gamma_{oi} > 2n + 1 -\frac{1}{n}\right\}.
\end{align}
The derivative of local observation error $e_i= x_i - \hat{x}_i$ can be computed from \eqref{eq:system} and \eqref{eq:local observer} as
\begin{align}\label{dot_ei}
    \dot{e}_i = \left(\mathbf{A} - \frac{1}{2}\mathbf{Q}_iC^{\top}C\right) e_i+(f_i-f_{0i}),
\end{align}
where $f_{0i}:=f_0(\varepsilon_i,t)$.
\begin{remark}
The design insight that the nonlinear model $f_0(\cdot,t)$ of leader is available to the followers is consistent with model-based nonlinear observer designs, such as local/global observers, and extended Kalman filters, which commonly exploit the nonlinear dynamics or their nominal models (\cite{khalil2015nonlinear}). In the proposed local observer, $f_0(\varepsilon_i,t)$ only uses the known leader dynamics without requiring global access to the exact leader state. This setting is also practical, since leader or reference dynamics are often obtained from physical modeling or reference-generator design, for example, target motion models in phased-array radar tracking are widely used for filtering, prediction, and beam scheduling (\cite{daum1983decoupled,hao2024nonmyopic}).
\end{remark}

To facilitate a unified stability analysis for the local and distributed observers, define the Lyapunov candidate as $V_{oi} = \varrho_i^{2n+1}  \left(e_i^{\top} \mathbf{Q}_i^{-1} e_i +  \zeta_i^{\top} \mathbf{Q}_i^{-1} \zeta_i\right)$.
The derivative of $V_{oi}$ along the trajectories of \eqref{eq:dev zeta} and \eqref{dot_ei} can be calculated as
\begin{align}
    \begin{aligned}
    \dot{V}_{oi} &= \frac{d(\varrho_i^{2n+1})}{dt}\big(e_i^\top \mathbf{Q}_i^{-1}e_i + \zeta_i^\top \mathbf{Q}_i^{-1}\zeta_i\big) \\
    &\quad + \varrho_i^{2n+1}\left( 2e_i^\top \mathbf{Q}_i^{-1}\dot{e}_i + 2\zeta_i^\top \mathbf{Q}_i^{-1}\dot{\zeta}_i \right) \\
    &\quad + \varrho_i^{2n+1}\left( e_i^\top \dot{\mathbf{Q}_i}^{-1}e_i + \zeta_i^\top \dot{\mathbf{Q}_i}^{-1}\zeta_i \right).
\end{aligned}
\end{align}
Using Lemma \ref{Lemma dQ}, the derivative of the inverse matrix satisfies
$\dot{\mathbf{Q}_i}^{-1} = -\mathbf{Q}_i^{-1}\frac{d\mathbf{Q}_i}{d\varrho_i}\mathbf{Q}_i^{-1}\dot{\varrho_i}
\le -\frac{\varrho_i}{n\gamma_{oi}}\mathbf{Q}_i^{-1}$. Combining with $\dot{\varrho_i} = {\varrho_i^2}/{\gamma_{oi}}$, \eqref{eq:PLE2}, \eqref{eq:dev zeta} and \eqref{dot_ei}, it yields
\begin{align}\label{dot V_oi}
    \begin{aligned}
    \dot{V}_{oi}
    \le&~ \frac{(2n+1)\varrho_i}{\gamma_{oi}} V_{oi} - \varrho_i V_{oi} - \frac{\varrho_i}{n\gamma_{oi}} V_{oi} \\
    &  + 2\varrho_i^{2n+1}e_i^\top \mathbf{Q}_i^{-1}(f_i-f_0)  \\
    =& \left( \frac{2n+1 - 1/n}{\gamma_{oi}} - 1 \right)\varrho_i V_{oi}\\
    &+ 2\varrho_i^{2n+1}e_i^\top \mathbf{Q}_i^{-1}(f_i-f_{0i}).
\end{aligned}
\end{align}

\subsection{Event-triggered output-feedback controller}

To preserve the independent controller--observer parameter selection while reducing computational burden, we design the  event-triggered output-feedback controller as:
\begin{align}
    &u_i = -\frac{1}{2} B^{\top}\bar{\mathbf{P}}_i(r_i)(\bar{\hat{x}}_i - \bar{\varepsilon}_i), \quad t \in [t_k^i, t_{k+1}^i), \label{ui_ETC} \\
     &r_i(t) = \gamma_{ci} \alpha(t), \label{eq:r}
\end{align}
where $\bar{\mathbf{P}}_i(r_i) = \mathbf{P}_i(r_i(t_k^i))$, $\bar{\hat{x}}_i = \hat{x}_i(t_k^i)$, and $\bar{\varepsilon}_i = \varepsilon_i(t_k^i)$ are the triggered variables of corresponding signals. The time-varying parameter is\(r_i(t)\), where \(\gamma_{ci}\in\Omega_{ci}\). The admissible set \(\Omega_{ci}\) is selected according to \eqref{Omega_ic}.
Let $\Delta_{x_i} = \bar{\hat{x}}_i - \hat{x}_i$,  $\Delta_{\varepsilon_i} = \bar{\varepsilon}_i - \varepsilon_i$ and $\Delta_{\mathbf{P}_i} = \mathbf{P}_i - \bar{\mathbf{P}}_i$ be the difference between the triggered and the original signals. The tracking error of follower $i$ is denoted as $\tilde{x}_i = x_i - x_0$. With the fact $\bar{\hat{x}}_i - \bar{\varepsilon}_i = (\hat{x}_i + \Delta_{x_i}) - (\varepsilon_i + \Delta_{\varepsilon_i}) = (\hat{x}_i - \varepsilon_i) + (\Delta_{x_i} - \Delta_{\varepsilon_i})$, the derivative of $\tilde{x}_i$ can be computed from \eqref{eq:system} and \eqref{ui_ETC} as
\begin{align}\label{eq:tracking_error_dynamics}
\dot{\tilde{x}}_i
=&
\left( \mathbf{A} - \frac{1}{2} BB^{\top}\mathbf{P}_i \right)\tilde{x}_i
+ \frac{1}{2} BB^{\top}\mathbf{P}_i(e_i + \zeta_i) \notag\\
& + \frac{1}{2} BB^{\top}\bar{\mathbf{P}}_i(\Delta_{\varepsilon_i} - \Delta_{x_i}) + \frac{1}{2}BB^{\top}\Delta_{\mathbf{P}_i}(\hat{x}_i - \varepsilon_i) \notag\\
&+ (f_i - f_0).
\end{align}

Choose a candidate Lyapunov function as $ V_{ci} = r_i\tilde{x}_i^{\top} \mathbf{P}_i \tilde{x}_i$.
With  \eqref{eq:PLE1},  \eqref{eq:tracking_error_dynamics}, and the fact $\dot{r}_i =  {r_i^2}/{\gamma_{ci}}$,  the derivative of $V_{ci}$ is given by
\begin{align}
\dot V_{ci}
&= \dot r_i\,\tilde x_i^\top \mathbf{P}_i\tilde x_i
+ r_i \dot r_i \,\tilde x_i^\top \frac{d\mathbf{P}_i}{dr_i}\,\tilde x_i - r_i^2\,\tilde x_i^\top \mathbf{P}_i\tilde x_i \notag\\
& \quad
+ r_i\,\tilde x_i^\top \mathbf{P}_iBB^\top \mathbf{P}_i\,(e_i+\zeta_i)
+ 2r_i\,\tilde x_i^\top \mathbf{P}_i\,(f_i-f_0) \notag\\
& \quad+  r_i  \tilde x_i^{\top}\mathbf{P}_iBB^{\top}\bar{\mathbf{P}}_i(\Delta_{\varepsilon_i} - \Delta_{x_i}) \notag\\
&\quad
+  r_i \tilde x_i^{\top}\mathbf{P}_iBB^{\top}\Delta_{\mathbf{P}_i}
(\hat{x}_i - \varepsilon_i).
\label{eq:Vci_dot_step1}
\end{align}
%Using \eqref{eq:4}, it holds that 
%\begin{align}
%    \mathbf{P}_iBB^\top \mathbf{P}_i
%= ( r_i \mathbf{L}_{ci} \mathbf{P}_{ni} \mathbf{L}_{ci} )\, B B^\top \, ( r_i \mathbf{L}_{ci} \mathbf{P}_{ni} \mathbf{L}_{ci} ).
%\end{align}
%From the matrix pencil result in Lemma \ref{lemma_matrix pencil}, it follows that $- \delta_i \mathbf{P}_{ni} + \mathbf{P}_{ni}\mathbf{L}_{ci} B B^{\top} \mathbf{L}_{ci} \mathbf{P}_{ni} \leq 0$, 
%where the constant parameter $\delta_i = \max\{\sigma(\mathbf{P}_{ni} \mathbf{L}_{ci} B B^{\top} \mathbf{L}_{ci} \mathbf{P}_{ni}, \mathbf{P}_{ni})\}$. Then, it follows that 
%\begin{align}\label{delta_i develp}
%    \mathbf{P}_iBB^\top \mathbf{P}_i \leq \delta_i r_i \mathbf{P}_i.
%\end{align}
Together with \eqref{eq:5} in Lemma \ref{lemma dP_i}, \eqref{eq:Vci_dot_step1} can be rewritten as follows.
\begin{align}\label{dot V_ci step 2}
\dot V_{ci}
\leq &\left(\frac{2+\xi_i}{\gamma_{ci}}-1\right) r_i\,V_{ci} + \delta_i r_i^2\,\tilde x_i^{\top}\mathbf{P}_i\,(e_i+\zeta_i)
\notag\\
&
+ 2r_i\,\tilde x_i^{\top}\mathbf{P}_i\,(f_i-f_0) +  r_i  \tilde x_i^{\top}\mathbf{P}_iBB^{\top} \notag \\
& ~~~\times \Big[\underbrace{\bar{\mathbf{P}}_i(\Delta_{\varepsilon_i} - \Delta_{x_i}) +  \Delta_{\mathbf{P}_i}
(\hat{x}_i - \varepsilon_i) }_{\mathbf{M}_i}\Big],
\end{align}
where $\mathbf{M}_i$ is introduced by the event-triggered mechanism.
Applying Young's Inequality, we obtain  
\begin{align*}
     \tilde x_i^{\top}\mathbf{P}_iBB^{\top}\mathbf{M}_i   \leq \frac{1}{2} \left( \tilde x_i^{\top}\mathbf{P}_iBB^{\top}\mathbf{P}_i\tilde x_i + \mathbf{M}_i^{\top}BB^{\top}\mathbf{M}_i\right),
\end{align*}
with the following inequality for $\mathbf{M}_i^{\top}BB^{\top}\mathbf{M}_i$:
\begin{align}\label{eq:MBBM}
    \frac{\mathbf{M}_i^{\top}BB^{\top}\mathbf{M}_i}{2} 
    &\le  (\Delta_{\varepsilon_i} - \Delta_{x_i})^{\top}\bar{\mathbf{P}}_iBB^{\top}\bar{\mathbf{P}}_i(\Delta_{\varepsilon_i} - \Delta_{x_i}) \notag \\
    & \quad +   (\hat{x}_i - \varepsilon_i)^{\top}\Delta_{\mathbf{P}_i}BB^{\top}\Delta_{\mathbf{P}_i}(\hat{x}_i - \varepsilon_i).
\end{align}
Let  
\begin{align}\label{eq:trigger func}
\Psi_i
&=
(\Delta_{\varepsilon_i} - \Delta_{x_i})^{\top}\bar{\mathbf{P}}_iBB^{\top}\bar{\mathbf{P}}_i(\Delta_{\varepsilon_i} - \Delta_{x_i}) \notag\\
&\quad
+
(\hat{x}_i - \varepsilon_i)^{\top}\Delta_{\mathbf{P}_i}BB^{\top}\Delta_{\mathbf{P}_i}(\hat{x}_i - \varepsilon_i).
\end{align}
By applying (\ref{eq:MBBM}) and (\ref{eq:trigger func}),  $\dot V_{ci}$ is simplified to
\begin{align}\label{dot V_ci}
\hspace{-0.2em}\dot V_{ci}
 \leq &\left(\frac{2+\xi_i}{\gamma_{ci}}-1\right) r_i\,V_{ci} + \delta_i r_i^2\,\tilde x_i^{\top}\mathbf{P}_i\,(e_i+\zeta_i)+r_i \Psi_i
\notag\\
&
+ 2r_i\,\tilde x_i^{\top}\mathbf{P}_i\,(f_i-f_0) + \frac{r_i}{2}  \tilde x_i^{\top}\mathbf{P}_iBB^{\top}\mathbf{P}_i\tilde x_i .
\end{align}
Then, the triggering condition is designed as 
\begin{align}\label{triggering condition}
    t_{k+1}^i = \inf \{ t > t_k^i : V_{ci} - 2\Psi_i \le 0 \}.
\end{align}
Once the triggering condition is satisfied,  the controller updates the triggered values to the current signals, i.e., $\bar{\mathbf{P}}_i \leftarrow {\mathbf{P}}_i$, $\bar{\hat{x}}_i\leftarrow {\hat{x}}_i$, and $\bar{\varepsilon}_i \leftarrow {\varepsilon}_i $. Consequently, the triggering errors $\Delta_{\varepsilon_i}$, $\Delta_{x_i}$, and $\Delta_{\mathbf{P}_i}$ are reset to zero. It should be noted that the triggering condition in (\ref{triggering condition}) is associated solely with $V_{ci}$. This specific design choice satisfies the requirement of the separation principle.

\subsection{Consensus tracking without DoS attacks}
Depending on the control scheme developed thus far, the leader-follower consensus tracking in the absence of DoS attacks is established by the following theorem.
% \begin{theorem}\label{Theorem 1}
	%     \textit{(Consensus Tracking in the Absence of DoS Attacks)}  
	%     Consider the MAS modeled as \eqref{eq:system}. Under Assumptions \ref{ass:nonlinearity} and \ref{Ass:Graph}, by implementing the local observer \eqref{eq:local observer}--\eqref{eq:varrho}, the distributed observer \eqref{dis_observer}, and the event-triggered controller \eqref{ui_ETC}--\eqref{eq:r} subject to the triggering condition \eqref{triggering condition}, the following results are guaranteed:
	%     \begin{enumerate}
		%         \item Prescribed-time consensus tracking is achieved in the sense of Definition 1 with the prescribed time $t_f$.
		%         \item There exists some positive constants $\Check{m}_i$, $\Check{M}_i$, $\Check{m}_{u_i}$ and $\Check{M}_{u_i}$ for $\chi_i := [\tilde{x}_i^\top, (e_i + \zeta_i)^\top]^\top$ and $u_i$ such that
		%         \begin{align}
			%             |\chi_i(t)| & \le (t_f-t)^{\Check{m}_i} \Check{M}_i |\chi_i(0)|, \nonumber \\
			%             |u_i(t)| & \le (t_f - t)^{\Check{m}_{u_i}} \Check{M}_{u_i} |\chi_i(0)|. \nonumber
			%         \end{align}
		%         \item Zeno behavior is excluded.
		%     \end{enumerate}
	% \end{theorem}
\begin{theorem}\label{Theorem 1}
	\textit{(A separation principle for consensus tracking in the absence of DoS attacks)}
	Consider the MAS consisting of a leader \eqref{eq:leader} and \(N\) followers \eqref{eq:system} under Assumptions~\ref{ass:nonlinearity} and \ref{Ass:Graph}.   Suppose that the hybrid observer, composed of the distributed observer~\eqref{dis_observer} and the local observer~\eqref{eq:local observer}, together with the event-triggered controller~\eqref{ui_ETC} under the triggering condition~\eqref{triggering condition}, is applied. Then, the following results hold: $i)$ the observer error convergences to zero within the prescribed time $t_f$, and the prescribed-time consensus tracking  is achieved in the sense of Definition~1;  $ii)$ Zeno behavior is excluded; and $iii)$ there exist positive constants $\Check{m}_i$, $\Check{M}_i$,
	$\Check{m}_{u_i}$, and $\Check{M}_{u_i}$ such that, for
	$\chi_i := [\tilde{x}_i^\top, (e_i+\zeta_i)^\top]^\top$,
	\begin{align}
		|\chi_i(t)|
		&\le
		(t_f-t)^{\Check{m}_i}\Check{M}_i|\chi_i(0)|,
		\nonumber\\
		|u_i(t)|
		&\le
		(t_f-t)^{\Check{m}_{u_i}}\Check{M}_{u_i}|\chi_i(0)|.
		\nonumber
	\end{align}
     In addition, the designed output-feedback controller and the hybrid observer satisfy the separation principle in the sense that the coefficients \(\gamma_{ci}\) and \(\gamma_{oi}\) associated with the controller and observer gains can be independently selected from the following two sets:
    \begin{align}\label{Omega_ci oi}
        &\Omega_{ci}
		=
		\left\{
		\gamma_{ci} \in \mathbb{R}
		\,\middle|\,
		\gamma_{ci} > \max\{4+2\xi_i,\ t_f\}
		\right\}, \nonumber \\
      &\Omega_{oi} = \left\{\gamma_{oi} \in \mathbb{R} | \gamma_{oi} > 2n + 1 -\frac{1}{n}\right\}.
    \end{align}
\end{theorem}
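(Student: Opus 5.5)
The plan is to analyze the composite Lyapunov function $V_i = V_{ci} + \kappa_i V_{oi}$ for each follower, with a weight $\kappa_i>0$ chosen only for the analysis. The closed-loop derivative is then bounded by combining \eqref{dot V_oi} and \eqref{dot V_ci}. There are three coupling sources to handle: the observer-injection term $\delta_i r_i^2\tilde x_i^\top\mathbf{P}_i(e_i+\zeta_i)$, the nonlinear terms $2r_i\tilde x_i^\top\mathbf{P}_i(f_i-f_0)$ and $2\varrho_i^{2n+1}e_i^\top\mathbf{Q}_i^{-1}(f_i-f_{0i})$, and the event-triggered terms $r_i\Psi_i+\frac{r_i}{2}\tilde x_i^\top\mathbf{P}_iBB^\top\mathbf{P}_i\tilde x_i$.

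\textbf{Event-triggered terms.} The triggering rule \eqref{triggering condition} enforces $\Psi_i\le V_{ci}/2$ between events. By \eqref{eq:5}, $\frac{r_i}{2}\tilde x_i^\top\mathbf{P}_iBB^\top\mathbf{P}_i\tilde x_i\le \frac{\delta_i}{2}r_i^2\tilde x_i^\top\mathbf{P}_i\tilde x_i$. I expect both contributions to be absorbed by the controller margin $(1-\frac{2+\xi_i}{\gamma_{ci}})r_iV_{ci}$, using $\gamma_{ci}>4+2\xi_i$; the point is that this margin must exceed $\frac12$.

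\textbf{Nonlinear terms.} I will split $f_i-f_0=(f_i(x_i)-f_i(x_0))+(f_i(x_0)-f_0(x_0))$; if the second piece is not identically zero, I read Assumption~\ref{ass:nonlinearity} as covering $i$ versus $0$. This gives the bound $|f_i-f_0|_e\le_e\theta_i\mathbf{A}_f|\tilde x_i|_e$. For the observer, $f_i-f_{0i}$ evaluated at $x_i$ and $\varepsilon_i$ is bounded by $\theta_i\mathbf{A}_f|x_i-\varepsilon_i|_e$ with $x_i-\varepsilon_i=\tilde x_i-\zeta_i$. Writing everything in scaled coordinates $\mathbf{L}_{ci}\tilde x_i$ and $\mathbf{L}_{oi}^{-1}e_i$, the lower-triangular structure of $\mathbf{A}_f$ yields bounds of order $\theta_i r_i^{-1}$ relative to the dominant $r_i^2$ (or $\varrho_i^{2n+2}$) terms, uniformly in time because $r_i,\varrho_i\ge1$. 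Since $r_i,\varrho_i\to\infty$ as $t\to t_f$, the unknown $\theta_i$ only matters on an initial interval $[0,t^*]$ where all signals stay bounded by Lipschitz continuity and finite escape is excluded. After $t^*$, the blow-up gains dominate.

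\textbf{Observer coupling and the main obstacle.} The hardest step is decoupling the cross term $\delta_i r_i^2\tilde x_i^\top\mathbf{P}_i(e_i+\zeta_i)$ without tying $\gamma_{ci}$ to $\gamma_{oi}$.
\begin{itemize}
\item I would use Young's inequality with a free scalar $\lambda$, giving $\frac{\delta_i r_i}{2\lambda}V_{ci}\cdot r_i+\frac{\lambda\delta_i r_i^3}{2}(e_i+\zeta_i)^\top\mathbf{P}_i(e_i+\zeta_i)$.
\item I would then compare the last quadratic form against $\varrho_i^{2n+1}(e_i^\top\mathbf{Q}_i^{-1}e_i+\zeta_i^\top\mathbf{Q}_i^{-1}\zeta_i)$ via a generalized eigenvalue of the pencil built from $\mathbf{L}_{ci}\mathbf{P}_{ni}\mathbf{L}_{ci}$ and $\varrho_i^{2n+1}\mathbf{Q}_i^{-1}$ (Lemma~\ref{lemma_matrix pencil}).
\item Since $\mathbf{Q}_i^{-1}\propto\varrho_i^{-(2n-1)}\mathbf{L}_{oi}\mathbf{Q}_{ni}^{-1}\mathbf{L}_{oi}$, the ratio of the two forms is a fixed polynomial in $\gamma_{ci}/\gamma_{oi}$ times a power of $\alpha$. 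A mismatch in the powers of $\alpha$ can be dominated near $t_f$ by the residual decay rate $(1-\frac{2n+1-1/n}{\gamma_{oi}})\varrho_iV_{oi}$, given any $\gamma_{oi}\in\Omega_{oi}$.
\item The free weight $\kappa_i$ then absorbs the resulting constant, so that neither $\gamma_{ci}$ nor $\gamma_{oi}$ has to be restricted beyond \eqref{Omega_ci oi}.
\end{itemize}
If the powers cannot be matched directly, I would fall back on the fact that $V_{oi}$ decays like $(t_f-t)^{c}$ with an arbitrarily large exponent relative to the cross-term growth. This is because $\varrho_i\ge\gamma_{oi}/t_f$, so the observer channel can be solved first and its output treated as a vanishing input to the $\tilde x_i$ channel.

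\textbf{Conclusion.} These estimates yield $\dot V_i\le -c\,\alpha V_i$ on $[t^*,t_f)$, hence $V_i(t)\le V_i(t^*)\big(\frac{t_f-t}{t_f-t^*}\big)^{c}$. Undoing the scalings $r_i\mathbf{P}_i$ and $\varrho_i^{2n+1}\mathbf{Q}_i^{-1}$, whose eigenvalues are polynomial in $\alpha$, gives the power-law bound on $|\chi_i|$. The bound on $u_i$ follows because $|B^\top\bar{\mathbf{P}}_i|$ grows polynomially in $\alpha$ while $|\chi_i|$ decays with a larger power, provided $c$ is large; if not, I would compare it against the triggered error through $\Psi_i\le V_{ci}/2$. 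Together these give i) and iii).

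For ii), Zeno exclusion, I would show that $\frac{d}{dt}(2\Psi_i/V_{ci})$ is bounded on any compact subinterval of $[0,t_f)$ by the boundedness of all signals. Since the ratio resets to $0$ at each event and must reach $1$ before the next one, the inter-event times are uniformly bounded below on $[0,t_f-\epsilon]$.
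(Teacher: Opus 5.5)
Your overall architecture is the paper's: a composite $V_{ci}+c_iV_{oi}$, triggering terms absorbed by the controller margin, observer coupling handled by homogeneity in $\alpha$ plus a free weight, and nonlinearities one order lower in $\alpha$. Two of your variations are sound. First, there is no mismatch of powers of $\alpha$. With $\mathbf{L}_{ci}=\Gamma_{ci}\mathbf{L}_\alpha$ and $\mathbf{L}_{oi}=\Gamma_{oi}\mathbf{L}_\alpha$, the cross term, $r_iV_{ci}$ and $\varrho_iV_{oi}$ are all $\alpha^3$ times constant quadratic forms in $\mathbf{L}_\alpha\tilde x_i$ and $\mathbf{L}_\alpha(e_i+\zeta_i)$, so Young's inequality with a large weight does the job of the paper's pencil with $h_i$. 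Second, your initial-interval argument for the unknown $\theta_i$ is a valid substitute for integrating $\dot V_i\le(-\Upsilon_i\alpha+s_i)V_i$ directly.

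The step that fails is the absorption of the event-triggered terms. Because $\mathbf{L}_{ci}B=B$, we have $\delta_i=B^\top\mathbf{P}_{ni}B$. Left-multiplying the PLE at $r_i=1$ by $\mathbf{P}_{ni}^{-1}$ and taking traces gives $B^\top\mathbf{P}_{ni}B=n$. Hence your two contributions are bounded only by $\frac{1+n}{2}r_iV_{ci}\ge r_iV_{ci}$, while the margin $1-\frac{2+\xi_i}{\gamma_{ci}}$ is below $1$. A margin above $\frac12$ pays only for $r_i\Psi_i$. Even the sharp estimate $|r_i\tilde x_i^\top\mathbf{P}_iBB^\top\mathbf{M}_i|\le r_i|B^\top\mathbf{P}_i\tilde x_i|\,|B^\top\mathbf{M}_i|\le\sqrt{n}\,r_iV_{ci}$, which uses $|B^\top\mathbf{M}_i|^2\le2\Psi_i<V_{ci}$, does not fit. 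Nothing then remains for the $\tilde x_i$-part of your Young split.

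The paper's Step~3 has the same defect. The $(1,1)$ block of its pencil forces $\kappa_i>\delta_i/2=n/2$, so $\kappa_{ci}=\frac12-\frac{2+\xi_i}{\gamma_{ci}}-\kappa_i$ is negative, and you cannot repair this by following it. With the threshold $V_{ci}-2\Psi_i\le0$ as stated, no worst-case estimate of this type closes. You would need a smaller threshold that still involves only the controller, e.g.\ triggering when $\Psi_i\ge\sigma_iV_{ci}$ with $\sigma_i<1/(8n)$; this changes the statement. Separately, apply your Young step to the true bilinear term $r_i\tilde x_i^\top\mathbf{P}_iBB^\top\mathbf{P}_i(e_i+\zeta_i)$. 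Replacing it by $\delta_ir_i^2\tilde x_i^\top\mathbf{P}_i(e_i+\zeta_i)$ is not a valid bound for a sign-indefinite term.

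Your Zeno argument also has a gap. Boundedness of the signals on $[0,t_f-\epsilon]$ bounds $\dot\Psi_i$ and $\dot V_{ci}$, but not $\frac{d}{dt}(2\Psi_i/V_{ci})$. The ratio argument needs $\dot\Psi_i$ to be controlled by $V_{ci}$ and $\Psi_i$ themselves, i.e., the triggering error must be driven by the quantity that defines the threshold. Here $\Psi_i$ is built from $\nu_i=\tilde x_i-(e_i+\zeta_i)$ and its increments, which grow at a rate set by $e_i+\zeta_i$, invisible to $V_{ci}$. Suppose $\tilde x_i(t_k^i)$ is zero or small while $e_i+\zeta_i\neq0$. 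Then $\Psi_i$ and $V_{ci}$ both start near zero with comparable quadratic growth, and the rule can fire again almost immediately. So no uniform dwell time follows. The paper's bound $t_{k+1}^i-t_k^i\ge V_{ci}(t_{k+1}^i)/(2\Theta_i)$ has the same weakness, since $V_{ci}(t_{k+1}^i)$ is not bounded away from zero. Excluding Zeno along these lines would need observer-side terms or an additive positive margin in the threshold.
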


\begin{proof}
    The proof is divided into two parts.
    
    \textbf{Part 1}: The overall Lyapunov candidate function for the closed-loop system modeled in \eqref{eq:system} is defined as
    \begin{align}
        V_i&=V_{ci} + c_i V_{oi}, \nonumber\\
        &=r_i\tilde{x}_i^{\top} \mathbf{P}_i \tilde{x}_i + c_i\varrho_i^{2n+1}  \left(e_i^{\top} \mathbf{Q}_i^{-1} e_i +  \zeta_i^{\top} \mathbf{Q}_i^{-1} \zeta_i\right),
    \end{align}
where $c_i$ is a positive constant parameter. 
Combining \eqref{dot V_ci} with \eqref{dot V_oi}, the derivative of $V_i$ yields 
\begin{align}\label{dot_V_con_without reduce}
    \dot{V}_i \leq & \left(\frac{2+\xi_i}{\gamma_{ci}}-1\right) r_i\,V_{ci} + c_i \left( \frac{2n+1 - 1/n}{\gamma_{oi}} - 1 \right)\varrho_i V_{oi} \nonumber  \\
    & + \frac{\delta_i}{2} r_i^2 \tilde x_i^{\top}\mathbf{P}_i \tilde x_i  + \delta_i r_i^2\,\tilde x_i^{\top}\mathbf{P}_i\,(e_i+\zeta_i)  \nonumber\\
&
+ 2r_i\,\tilde x_i^{\top}\mathbf{P}_i\,(f_i-f_0) +r_i \Psi_i\notag\\
&+ 2c_i\varrho_i^{2n+1}e_i^\top \mathbf{Q}_i^{-1}(f_i-f_{0i}).  
\end{align}

\textbf{Step 1.} We first show that $\frac{\delta_i}{2} r_i^2 \tilde x_i^{\top}\mathbf{P}_i \tilde x_i  + \delta_i r_i^2\,\tilde x_i^{\top}\mathbf{P}_i\,(e_i+\zeta_i)$ can be upper bounded by a linear combination of $  V_{ci}$ and $  V_{oi}$.
Recall that $r_i=\gamma_{ci}\alpha$ and $\varrho_i=\gamma_{oi}\alpha$, we can equivalently write the diagonal matrices $\mathbf{L}_{ci}$ and $\mathbf{L}_{oi}$ as 
\begin{align} \label{Lic Lio}
\mathbf{L}_{ci} 
=& \Gamma_{ci} \mathbf{L}_{\alpha}, \qquad   \notag \\
\mathbf{L}_{oi} 
= &\Gamma_{oi} \mathbf{L}_{\alpha}, 
\end{align}
where 
\begin{align}
\Gamma_{ci} &= \mathrm{diag}\{\gamma_{ci}^{n-1},\, \gamma_{ci}^{n-2},\, \ldots,\, 1\}, \notag\\
\Gamma_{oi} &= \mathrm{diag}\{\gamma_{oi}^{n-1},\, \gamma_{oi}^{n-2},\, \ldots,\, 1\}, \notag\\
\mathbf{L}_{\alpha} &= \mathrm{diag}\{\alpha^{n-1},\, \alpha^{n-2},\, \ldots,\, 1\}.
\end{align}
With the fact $\mathbf{P}_i  = r_i \mathbf{L}_{ci} \mathbf{P}_{ni} \mathbf{L}_{ci}$ in (\ref{eq:4}), the coupling term $ r_i^2\,\tilde x_i^{\top}\mathbf{P}_i\,(e_i+\zeta_i)$ satisfies
\begin{align*}
  r_i^2\,\tilde x_i^{\top}\mathbf{P}_i\,(e_i+\zeta_i)
&=   r_i^{3}\tilde{x}_i^{\top} \mathbf{L}_{ci} \mathbf{P}_{ni} \mathbf{L}_{ci} (e_i+\zeta_i) \notag\\
&=   \alpha^{3}\tilde{x}_i^{\top} \mathbf{L}_{\alpha}
\big(\gamma_{ci}^{3}\Gamma_{ci} \mathbf{P}_{ni} \Gamma_{ci} \big)
\mathbf{L}_{\alpha} (e_i+\zeta_i).
\end{align*}
For $V_{ci}$ and $V_{oi}$, we have the following transformations
\begin{align}
r_i V_{ci}
&= \alpha^{3}\tilde{x}_i^{\top}\mathbf{L}_{\alpha}
\big(\gamma_{ci}^{3}\Gamma_{ci}\mathbf{P}_{ni} \Gamma_{ci} \big)
\mathbf{L}_{\alpha}\tilde{x}_i, \label{rVc_i}
\\[2mm]
\varrho_i V_{oi}
&= \alpha^{3}(e_i+\zeta_i)^{\top}\mathbf{L}_{\alpha}
\big(c_i\gamma_{oi}^{3}\Gamma_{oi}\mathbf{Q}_{ni}^{-1}\Gamma_{oi}\big)
\mathbf{L}_{\alpha}(e_i+\zeta_i).\label{rho Voi}
\end{align}
Align with \eqref{rVc_i}, it yields
\[\frac{\delta_i}{2} r_i^2 \tilde x_i^{\top}\mathbf{P}_i \tilde x_i = \frac{\delta_i}{2} \alpha^{3}\tilde{x}_i^{\top}\mathbf{L}_{\alpha}
\big(\gamma_{ci}^{3}\Gamma_{ci} \mathbf{P}_{ni} \Gamma_{ci} \big)
\mathbf{L}_{\alpha}\tilde{x}_i.\]
From Lemma \ref{lemma_matrix pencil}, we can choose $\kappa_i = \max\{\sigma(\mathbf{Q}_{i,\kappa 2}, -\mathbf{Q}_{i,\kappa 1})\}$ with $h_{i} > 0$ such that the matrix pencil holds $\kappa_i \mathbf{Q}_{i,\kappa 1} + \mathbf{Q}_{i,\kappa 2} < 0$,
since $\mathbf{Q}_{i,\kappa 1}$ is SND and $\mathbf{Q}_{i,\kappa 2}$ is symmetric, which are provided in \eqref{eq:Qi_kappa1} and \eqref{eq:Qi_kappa2}.
\begin{align}
\mathbf{Q}_{i,\kappa 1}
&=
\begin{bmatrix}
-\gamma_{ci}^{3}\Gamma_{ci}\mathbf{P}_{ni}\Gamma_{ci} & \mathbf{0} \\
\mathbf{0} & -h_i c_i \gamma_{oi}^{3}\Gamma_{oi}\mathbf{Q}_{ni}^{-1}\Gamma_{oi}
\end{bmatrix},
\label{eq:Qi_kappa1}
\\
\mathbf{Q}_{i,\kappa 2}
&=
\begin{bmatrix}
\frac{\delta_i}{2}\gamma_{ci}^{3}\Gamma_{ci}\mathbf{P}_{ni}\Gamma_{ci} &
\frac{\delta_i}{2}\gamma_{ci}^{3}\Gamma_{ci}\mathbf{P}_{ni}\Gamma_{ci} \\
\frac{\delta_i}{2}\gamma_{ci}^{3}\Gamma_{ci}\mathbf{P}_{ni}\Gamma_{ci} &
\mathbf{0}
\end{bmatrix}.
\label{eq:Qi_kappa2}
\end{align}
Then, the inequality $\frac{\delta_i}{4} r_i^2 \tilde{x}_i^{\top}\mathbf{P}_i \tilde{x}_i
+ \delta_i r_i^2 \tilde{x}_i^{\top}\mathbf{P}_i(e_i+\zeta_i)
\le
\kappa_i \big(r_i V_{ci}\big)
+ h_i \kappa_i\big(\varrho_i V_{oi}\big)$ can be regarded as a direct consequence of the quadratic form of
the matrix pencil. In particular, with respect to
$\alpha[\tilde{x}_i^{\top}\mathbf{L}_{\alpha},
(e_i+\zeta_i)^{\top}\mathbf{L}_{\alpha}]^{\top}$, the corresponding
quadratic form is given by
\begin{align}
&-\kappa_i \alpha^3 \tilde{x}_i^{\top}\mathbf{L}_{\alpha}
\bar{\mathbf P}_{ci}
\mathbf{L}_{\alpha}\tilde{x}_i
-\kappa_i \alpha^3(e_i+\zeta_i)^{\top}\mathbf{L}_{\alpha}
\bar{\mathbf Q}_{oi}
\mathbf{L}_{\alpha}(e_i+\zeta_i)
\nonumber\\
&\quad
+\frac{\delta_i}{2}\alpha^3\tilde{x}_i^{\top}\mathbf{L}_{\alpha}
\bar{\mathbf P}_{ci}
\mathbf{L}_{\alpha}\tilde{x}_i
+\delta_i\alpha^3\tilde{x}_i^{\top}\mathbf{L}_{\alpha}
\bar{\mathbf P}_{ci}
\mathbf{L}_{\alpha}(e_i+\zeta_i)
\le 0.\nonumber
\end{align}

\textbf{Step 2.} The coupling terms involving nonlinear functions $+ 2r_i\,\tilde x_i^{\top}\mathbf{P}_i\,(f_i-f_0) + 2c_i\varrho_i^{2n+1}e_i^\top \mathbf{Q}_i^{-1}(f_i-f_{0i})$ can be upper bounded by a linear combination of $V_{ci}$ and $V_{oi}$. 

Under Assumption \ref{ass:nonlinearity}, 
using \eqref{eq:4} and (\ref{eq:lipschitz_phi}), we have 
\begin{align}
2r_i\tilde{x}_i^{\top}&\mathbf{P}_i(f_i-f_0)
\le 2r_i \lvert \tilde{x}_i \rvert_e^{\top}
      \lvert \mathbf{P}_i \rvert_e
      \lvert f_i - f_0 \rvert_e \notag\\
&\le 2r_i \theta_i \lvert \tilde{x}_i \rvert_e^{\top}
      \big( r_i \mathbf{L}_{ci} \lvert \mathbf{P}_{ni} \rvert_e \mathbf{L}_{ci} \big)
      \mathbf{A}_f
      \lvert \tilde{x}_i \rvert_e \notag\\
&= 2\theta_i r_i^2 \lvert \tilde{x}_i \rvert_e^{\top}
     \mathbf{L}_{ci} \lvert \mathbf{P}_{ni} \rvert_e
     \big( \mathbf{L}_{ci} \mathbf{A}_f \mathbf{L}_{ci}^{-1} \big)
     \mathbf{L}_{ci}
     \lvert \tilde{x}_i \rvert_e .
\end{align}
From \eqref{Omega_ic}, we have that $r_i(t) \ge 1$, which implies $\mathbf{L}_{ci} \mathbf{A}_f \le_e \mathbf{A}_f \mathbf{L}_{ci}$. With \eqref{Lic Lio}, it yields  $$2r_i\tilde{x}_i^{\top}\mathbf{P}_i(f_i-f_0) \le 2\theta_i \gamma_{ci}^2 \alpha^2 \lvert \tilde{x}_i \rvert_e^{\top} \mathbf{L}_\alpha \Gamma_{ci} \lvert \mathbf{P}_{ni} \rvert_e \mathbf{A}_f \Gamma_{ci} \mathbf{L}_\alpha \lvert \tilde{x}_i \rvert_e.$$

Since $x_i-\varepsilon_i = x_i - x_0 - (\varepsilon_i - x_0) = \tilde{x}_i - \zeta_i$, based on the property of $\lvert \cdot \rvert_e$, we have that $\lvert x_i-\varepsilon_i \rvert_e = \lvert \tilde{x}_i - \zeta_i \rvert_e \le \lvert \tilde{x}_i \rvert_e + \lvert \zeta_i \rvert_e$. Similarly, with $\mathbf{Q}_i^{-1}(\varrho_i) = \varrho_i^{-(2n-1)} \mathbf{L}_{oi} \mathbf{Q}_{ni}^{-1} \mathbf{L}_{oi}$ and Assumption \ref{ass:nonlinearity}, it holds that 
\begin{align}
2&c_i\varrho_i^{2n+1} e_i^{\top}\mathbf{Q}_i^{-1}(f_i-f_{0i})
= 2c_i\varrho_i^{2} e_i^{\top}\mathbf{L}_{oi} \mathbf{Q}_{ni}^{-1}\mathbf{L}_{oi} (f_i-f_{0i}) \notag\\
&\le 2c_i\varrho_i^{2}\lvert e_i\rvert_e^{\top}
\mathbf{L}_{oi} \lvert \mathbf{Q}_{ni}^{-1}\rvert_e \mathbf{L}_{oi}
\big(\theta_i \mathbf{A}_f \lvert x_i-\varepsilon_i \rvert_e  \big) \\
&\le 2c_i\varrho_i^{2}\lvert e_i\rvert_e^{\top}
\mathbf{L}_{oi} \lvert \mathbf{Q}_{ni}^{-1}\rvert_e \mathbf{L}_{oi}
\big(\theta_i \mathbf{A}_f(\lvert \tilde{x}_i \rvert_e + \lvert \zeta_i \rvert_e) \big) \notag\\
&= 2c_i\theta_i \gamma_{oi}^{2}\alpha^{2}
\lvert e_i\rvert_e^{\top}
\mathbf{L}_{\alpha} \Gamma_{oi}
\lvert \mathbf{Q}_{ni}^{-1}\rvert_e
\mathbf{A}_f
\Gamma_{oi} \mathbf{L}_{\alpha}
(\lvert \tilde{x}_i \rvert_e + \lvert \zeta_i \rvert_e).\notag
\end{align}
Define two positive definite matrices
$\bar{\mathbf{P}}_{ni} = \mathrm{diag} \{\mathbf{P}_{ni}\}$ and $\bar{\mathbf{Q}}_{ni}^{-1}= \mathrm{diag} \{\mathbf{Q}_{ni}^{-1}\}$. Since $\mathbf{P}_{ni}$ and $\mathbf{Q}_{ni}^{-1}$ are SPD matrices, the following
generalized eigenvalue relationships hold
\begin{equation}
\begin{aligned}
-\mathbf{P}_{ni} + \delta_{i}^P \bar{\mathbf{P}}_{ni} &< 0, \\
-\mathbf{Q}_{ni}^{-1} + \delta_{i}^Q\bar{\mathbf{Q}}_{ni}^{-1} &< 0 ,
\end{aligned}
\end{equation}
where
\begin{equation}
\begin{aligned}
\delta_{i}^P&=\min\{\sigma(\mathbf{P}_{ni},\bar{\mathbf{P}}_{ni})\}, \\
\delta_{i}^Q&=\min\{\sigma(\mathbf{Q}_{ni}^{-1},\bar{\mathbf{Q}}_{ni}^{-1}\}.
\end{aligned}
\end{equation}
With Lemma~\ref{lemma_matrix pencil}, the matrix pencil satisfies $s_i \mathbf{Q}_{i,s1} + \mathbf{Q}_{i,s2}<0$, 
where $s_i=\max\{\sigma(\mathbf{Q}_{i,s2},-\mathbf{Q}_{i,s1})\}$. Since $\mathbf{Q}_{i,s1}$ is SND and $\mathbf{Q}_{i,s2}$ is symmetric, where two matrices are given as \eqref{eq:si_matrices}.

\begin{figure*}[!t]
\footnotesize
\setlength{\arraycolsep}{3pt}
\hrule
\vspace{0.3em}

\begin{equation}
\begin{aligned}
&s_i
\underbrace{
\left[
\begin{array}{cc}
-\delta_i^P\gamma_{ci}^{2}
\Gamma_{ci}|\bar{\mathbf P}_{ni}|_e\Gamma_{ci}
&
\mathbf 0
\\[1mm]
\mathbf 0
&
-c_i\delta_i^Q\gamma_{oi}^{2}
\Gamma_{oi}|\bar{\mathbf Q}_{ni}^{-1}|_e\Gamma_{oi}
\end{array}
\right]
}_{\mathbf Q_{i,s1}}
+
\underbrace{
\left[
\begin{array}{cc}
\theta_i\gamma_{ci}^{2}
\left(
\Gamma_{ci}^{\top}|\bar{\mathbf P}_{ni}|_e\mathbf A_f\Gamma_{ci}
+
\Gamma_{ci}\mathbf A_f^{\top}|\bar{\mathbf P}_{ni}|_e\Gamma_{ci}
\right)
&
c_i\theta_i\gamma_{oi}^{2}
\Gamma_{oi}\mathbf A_f^{\top}
|\bar{\mathbf Q}_{ni}^{-1}|_e\Gamma_{oi}
\\[1mm]
c_i\theta_i\gamma_{oi}^{2}
\Gamma_{oi}|\bar{\mathbf Q}_{ni}^{-1}|_e
\mathbf A_f\Gamma_{oi}
&
\mathbf 0
\end{array}
\right]
}_{\mathbf Q_{i,s2}}
<0 .
\end{aligned}
\label{eq:si_matrices}
\end{equation}

\begin{equation}
\begin{aligned}
&-s_i \Bigg[
\delta_i^P\alpha^2
|\tilde{x}_i|_e^{\top}
\mathbf L_{\alpha}
\big(
\gamma_{ci}^{2}\Gamma_{ci}
|\bar{\mathbf P}_{ni}|_e\Gamma_{ci}
\big)
\mathbf L_{\alpha}
|\tilde{x}_i|_e
+
\delta_i^Q\alpha^2
|e_i+\zeta_i|_e^{\top}
\mathbf L_{\alpha}
\big(
c_i\gamma_{oi}^{2}\Gamma_{oi}
|\bar{\mathbf Q}_{ni}^{-1}|_e\Gamma_{oi}
\big)
\mathbf L_{\alpha}
|e_i+\zeta_i|_e
\Bigg]
\\
&\quad
+2\theta_i r_i^2 |\tilde{x}_i|_e^{\top}
\mathbf L_{ci}|\mathbf P_{ni}|_e
\big(
\mathbf L_{ci}\mathbf A_f\mathbf L_{ci}^{-1}
\big)
\mathbf L_{ci}|\tilde{x}_i|_e
+2c_i\theta_i \gamma_{oi}^{2}\alpha^{2}
|e_i|_e^{\top}
\mathbf L_{\alpha}\Gamma_{oi}
|\mathbf Q_{ni}^{-1}|_e
\mathbf A_f
\Gamma_{oi}\mathbf L_{\alpha}
\big(
|\tilde{x}_i|_e+|\zeta_i|_e
\big)
\le 0 .
\end{aligned}
\label{eq:si_cross_condition}
\end{equation}

\vspace{0.2em}
\hrule
\vspace{0.5em}
\end{figure*}
With diagonal matrices $\bar{\mathbf{P}}_{ni}$ and $\bar{\mathbf{Q}}_{ni}^{-1}$, it follows that
\begin{align}
y^\top\bar{\mathbf{P}}_{ni} y
&=
|y|_e^\top|\bar{\mathbf{P}}_{ni}|_e |y|_e, \nonumber\\
y^\top \bar{\mathbf{Q}}_{ni}^{-1} y
&=
|y|_e^\top |\bar{\mathbf{Q}}_{ni}^{-1}|_e |y|_e.
\end{align}
Since
\begin{align}
V_{ci}
&\ge
\delta_{i}^P \alpha^{2}
\lvert \tilde{x}_i \rvert_e^{\top}
\mathbf{L}_{\alpha}
\big(
\gamma_{ci}^{2}\Gamma_{ci}
\lvert \bar{\mathbf{P}}_{ni} \rvert_e
\Gamma_{ci}
\big)
\mathbf{L}_{\alpha}
\lvert \tilde{x}_i \rvert_e,
\label{eq:Vci_lower_bound}
\end{align}
and
\begin{align}
V_{oi}
&\ge
\delta_{i}^Q \alpha^{2}
\lvert e_i+\zeta_i \rvert_e^{\top}
\mathbf{L}_{\alpha}
\big(
c_i\gamma_{oi}^{2}\Gamma_{oi}
\lvert \bar{\mathbf{Q}}_{ni}^{-1} \rvert_e
\Gamma_{oi}
\big)
\mathbf{L}_{\alpha}
\lvert e_i+\zeta_i \rvert_e,
\label{eq:Voi_lower_bound}
\end{align} 
Similarly, by applying the quadratic form associated with the matrix pencil in
\eqref{eq:si_cross_condition}, we obtain
\begin{align}
2r_i\,\tilde{x}_i^{\top}\mathbf{P}_i(f_i-f_0)
+ 2c_i\varrho_i^{2n+1}e_i^\top \mathbf{Q}_i^{-1}(f_i-f_{0i})
\leq s_i V_i .
\end{align}

\textbf{Step 3.} By the triggering condition \eqref{triggering condition}, it follows that $2\Psi_i \leq V_{ci} $. Then, for the term involving triggering function, we have that $r_i \Psi_i \le  \frac{1}{2}r_i V_{ci}$.

Based on the above discussion, \eqref{dot_V_con_without reduce} can be rewritten as
\begin{align}\label{Vi_dot middle}
    \dot{V}_i \leq -\kappa_{ci} r_i\,V_{ci} -\kappa_{oi} \varrho_i V_{oi} + s_i V_i,
\end{align}
where 
\begin{align}
\kappa_{ci}
&=
\frac{1}{2}
-
\frac{2+\xi_i}{\gamma_{ci}}
-
\kappa_i
>0,
\label{eq:kappa_ci}\\
\kappa_{oi}
&=
c_i
\left(
1-\frac{2n+1-1/n}{\gamma_{oi}}
\right)
-
h_i\kappa_i
>0.
\label{eq:kappa_oi}
\end{align}
Rearrange \eqref{Vi_dot middle}, it yields
\begin{align}\label{dot V con reduce}
    \dot{V}_i \leq -\Upsilon_i \alpha(t)V_i + s_i V_i,
\end{align}
where $\Upsilon_i = \min\{\kappa_{ci}\gamma_{ci},  \frac{\kappa_{oi} \gamma_{oi}}{c_i} \}$. Based on Lemma 1 in \cite{ref4}, we can obtain that
\begin{equation} \label{Vi upper bound}
V_i(t)\le \left(\frac{t_f-t}{t_f}\right)^{\Upsilon_i} e^{s_it_f}V_i(0),
\qquad t\in[0,t_f).
\end{equation}
Then, it follows that $V_i(t) \geq \alpha^2(t)\phi_{i,1}(\alpha_0)|\chi_i(t)|^2$, where $\alpha_0=\alpha(0)$ and
\begin{equation*}
\begin{aligned}
\phi_{i,1}(\alpha_0)
&=
\min\{\gamma_{ci}^2,\,c_i\gamma_{oi}^2\}
\underline{\lambda}_i(\alpha_0),\\
\underline{\lambda}_i(\alpha_0)
&=
\min\Big\{
\lambda_{\min}\!\big(\mathbf{L}_{ci}(\alpha_0)\mathbf{P}_{ni}
\mathbf{L}_{ci}(\alpha_0)\big),\\
&\qquad\quad
\lambda_{\min}\!\big(\mathbf{L}_{oi}(\alpha_0)\mathbf{Q}_{ni}^{-1}
\mathbf{L}_{oi}(\alpha_0)\big)
\Big\}.
\end{aligned}
\end{equation*}
Similarly, for $V_i(0)$, one has $V_i(0) \leq \phi_{i,2}(\alpha_0)|\chi_i(0)|^2$, 
where
\begin{equation*}
\begin{aligned}
\phi_{i,2}(\alpha_0)
&=
\max\{\gamma_{ci}^2,\,c_i\gamma_{oi}^2\}
\overline{\lambda}_i(\alpha_0),\\
\overline{\lambda}_i(\alpha_0)
&=
\max\Big\{
\lambda_{\max}\!\big(\mathbf{L}_{ci}(\alpha_0)\mathbf{P}_{ni}
\mathbf{L}_{ci}(\alpha_0)\big),\\
&\qquad\quad
\lambda_{\max}\!\big(\mathbf{L}_{oi}(\alpha_0)\mathbf{Q}_{ni}^{-1}
\mathbf{L}_{oi}(\alpha_0)\big)
\Big\}.
\end{aligned}
\end{equation*}
Let $\Check{m}_i =  \frac{\Upsilon_i}{2} + 1$ and $\Check{M}_i = \sqrt{\frac{\phi_{i,2}(\alpha_0)}{t_f^{\Upsilon_i} \phi_{i,1}(\alpha_0) }}$, the following upper bound holds for the stack vector $\chi_i$
\begin{align}\label{chi bound}
    |\chi_i(t)| &\le \sqrt{\frac{(t_f-t)^{\Upsilon_i + 2}\phi_{i,2}(\alpha_0)}{t_f^{\Upsilon_i}\phi_{i,2}(\alpha_0)}|\chi_i(0)|^2} \nonumber \\
    & \le (t_f-t)^{\Check{m}_i} \Check{M}_i |\chi_i(0)|.
\end{align}
Then, the consensus tracking of the MAS is achieved in the sense of Definition 1.
For the control input, \eqref{ui_ETC} can be rewritten as
\begin{align}
u_i
=-\frac12 B^\top \mathbf P_i\big(\tilde{x}_i-(e_i+\zeta_i)\big)
+\frac12 B^\top \mathbf M_i,
\end{align}
which follows that
\begin{align} \label{upper ui 1}
|u_i|
\le \frac12 \big|B^\top \mathbf P_i(\tilde{x}_i-(e_i+\zeta_i))\big|
+\frac12 |B^\top \mathbf M_i|.
\end{align}
By applying the triangle inequality, the right-hand side of (\eqref{upper ui 1}) can be bounded by using $V_i$.
It follows from \eqref{eq:MBBM} and \eqref{eq:trigger func} that $\mathbf M_i^\top B B^\top \mathbf M_i \le 2\Psi_i$.
Under the event-triggering condition \eqref{triggering condition}, it guarantees that $\Psi_i \le V_{ci} \le V_i$. This directly implies $\frac{1}{2} |B^\top \mathbf M_i| \le \frac{\sqrt{2}}{2}\sqrt{V_i}$.
Using \eqref{eq:5}, it yields $|B^\top \mathbf P_i \tilde{x}_i|^2 \le \delta_i V_{ci} \le \delta_i V_i$. 
Similarly, we have $|B^\top  \mathbf P_i (e_i+\zeta_i)|^2 \le \delta_i r_i (e_i+\zeta_i)^\top \mathbf P_i (e_i+\zeta_i)$.
Recalling that $r_i P_i = \gamma_{ci}^2 \alpha^2 \mathbf L_\alpha \Gamma_{ci} \mathbf P_{ni} \Gamma_{ci} \mathbf L_\alpha$ and $V_{oi} = \alpha^2(e_i + \zeta_i)^\top \mathbf L_\alpha (c_i \gamma_{oi}^2 \Gamma_{oi} \mathbf Q_{ni}^{-1} \Gamma_{oi}) \mathbf L_\alpha (e_i + \zeta_i)$, both expressions are quadratic forms in terms of $\alpha \mathbf L_\alpha (e_i + \zeta_i)$. By virtue of Lemma \ref{lemma_matrix pencil}, we can select a constant
\begin{align}
    \delta_{ui}
    =
    \max\left\{
    \sigma(
    \delta_i \gamma_{ci}^2 \Gamma_{ci} \mathbf P_{ni} \Gamma_{ci},
    c_i \gamma_{oi}^2 \Gamma_{oi} \mathbf Q_{ni}^{-1} \Gamma_{oi}
    )
    \right\}.
\end{align}
such that
$
\delta_i r_i (e_i + \zeta_i)^\top P_i (e_i + \zeta_i) \le \delta_{ui} V_{oi} \le \delta_{ui} V_i
$.
Substituting these upper bounds back into \eqref{upper ui 1}, the control input is bounded by
$
    |u_i(t)| \le \frac{1}{2} \big(\sqrt{\delta_i} + \sqrt{\delta_{ui}} + \sqrt{2}\big) \sqrt{V_i(t)}.
$
According to \eqref{Vi upper bound}, it can be concluded that there exist positive constants $\Check{m}_{u_i}$ and $\Check{M}_{u_i}$ such that
\begin{align} \label{ui bound}
    |u_i(t)| \le (t_f - t)^{\Check{m}_{u_i}} \Check{M}_{u_i} |\chi_i(0)|,
\end{align}
where $\Check{m}_{u_i} = \Upsilon_i / 2$ and 

\noindent$\Check{M}_{u_i} = \frac{1}{2} \big(\sqrt{\delta_i} + \sqrt{\delta_{ui}} + \sqrt{2}\big) \sqrt{e^{s_i t_f} \phi_{i,2}(\alpha_0) / \phi_{i,1}(\alpha_0)}$.

\textbf{Part 2.} The exclusion of Zeno behavior is proved by showing that there exists a strictly positive minimum inter-event time $\tau_i = t_{k+1}^i - t_k^i > 0$ for any agent $i$ during the prescribed time $t \in [0, t_f)$. For any time interval $t \in [t_k^i, t_{k+1}^i) \subset [0, t_f)$, the event-triggering condition guarantees that $\Psi_i(t) \le V_{ci}(t)$.

Note that during $t\in[t_{k}^{i},t_{k+1}^{i})$, the triggered values $\hat{\bar{x}}_{i}$, $\bar{\varepsilon}_{i}$, and $\bar{\mathbf{P}}_{i}$ remain constant, which implies $\dot{\Delta}_{x_{i}}=-\dot{\hat{x}}_{i}$, $\dot{\Delta}_{\varepsilon_{i}}=-\dot{\varepsilon}_{i}$, and $\dot{\Delta}_{\mathbf{P}_{i}}=\dot{\mathbf{P}}_{i}$. For analytical clarity, define the observation mismatch as $\nu_{i}=\hat{x}_{i}-\varepsilon_{i}$. Since $\hat{x}_{i}-\varepsilon_{i}=(x_{i}-e_{i})-(x_{0}+\zeta_{i})=\tilde{x}_{i}-(e_{i}+\zeta_{i})$, we have $\dot{\nu}_{i}=\dot{\hat{x}}_{i}-\dot{\varepsilon}_{i}$. Taking the time derivative of $\Psi_i$ yields
\begin{align}\label{eq:psi_dot}
    \dot{\Psi}_i =&  2(\Delta_{\varepsilon_i} - \Delta_{x_i})^{\top}\bar{\mathbf{P}}_i BB^\top \bar{\mathbf{P}}_i \dot{\nu}_i  + 2\nu_i \Delta_{\mathbf{P}_i} BB^\top \Delta_{\mathbf{P}_i} \dot{\nu}_i \notag \\
    & + 2\nu_i^{\top} \frac{d}{dt} (\Delta_{\mathbf{P}_i} BB^\top \Delta_{\mathbf{P}_i}) \nu_i \notag \\
   \leq &  2(\Delta_{\varepsilon_i} - \Delta_{x_i})^{\top}\bar{\mathbf{P}}_i BB^\top \bar{\mathbf{P}}_i \dot{\nu}_i + 2r_i \nu_i^\top \delta_i'\Delta_{\mathbf{P}_i} \dot{\nu}_i\notag \\
   & + \frac{2\delta_i'}{\gamma_{ci}} \nu_i ^{\top} r_i^2\Delta_{\mathbf{P}_i}\nu_i + c_{pi} \nu_i ^{\top} r_i^2 \mathbf{P}_i\nu_i,
\end{align}
where $\delta_i' = \max \{ \sigma( \Delta_{\mathbf{P}_i} BB^\top \Delta_{\mathbf{P}_i}, r_i \Delta_{\mathbf{P}_i} ) \}$  and $c_{pi} = \frac{2\delta_i'(1+\xi_i)}{\gamma_{ci}}$. 

From the definition of $\chi_i$, it shares the same convergence rate of $\nu_i$. Since $V_i$ is upper bounded, the forth term is strictly bounded. By selecting $\gamma_{oi}$ and $\gamma_{ci}$ according to \eqref{Omega_ci oi} and \eqref{Omega_ic}, $\Upsilon_i > 1$ can be easily guaranteed, which implies that \eqref{Vi upper bound} can be rewritten as $V_i(t) \leq (t_f - t)^{\Upsilon_i} Z_i V_i(0)$, where $Z_i = {e^{s_i t_f}}/{t_f^{\Upsilon_i}}$ is a constant. Since $r_i = \frac{\gamma_{ci}}{t_f - t}$, it can conclude that $r_i V_i \leq (t_f - t)^{\Upsilon_i-1} Z_i' V_i(0)$, where $Z_i' = \gamma_{ci} Z_i$ is a constant. As a result, the third term is upper bounded. 
 From \eqref{eq:Vci_dot_step1}, one has
\begin{align}
\dot V_{ci}
&\le
\dot r_i\,\tilde x_i^\top \mathbf P_i\tilde x_i
+
r_i\,\tilde x_i^\top
\frac{d\mathbf P_i}{dt}
\tilde x_i
+
2r_i\,\tilde x_i^\top \mathbf P_i\dot{\tilde x}_i \notag\\
&=
\frac{2+\xi_i}{\gamma_{ci}}r_iV_{ci}
+
2r_i\,\tilde x_i^\top \mathbf P_i\dot{\tilde x}_i \notag\\
&\le
-\kappa_{ci}r_iV_{ci}
+
s_iV_i .
\label{eq:Vci_dot_bound_final}
\end{align} 
It follows from \eqref{eq:Vci_dot_bound_final} that
\begin{equation}
2r_i\,\tilde x_i^\top \mathbf P_i\dot{\tilde x}_i
\le
-\left(
\kappa_{ci}
-
\frac{2+\xi_i}{\gamma_{ci}}
\right)r_iV_{ci}
+
s_iV_i .
\label{eq:cross_term_bound}
\end{equation}
Since $V_{ci}$ and $V_i$ are bounded, one further obtains $2r_i\,\tilde x_i^\top \mathbf P_i\dot{\tilde x}_i
\in\mathcal L_{\infty}$.
From \eqref{chi bound}, $\tilde{x}_i$ shares the same convergence rate with $(e_i + \zeta_i)$, we can obtanin that the third term is upper bounded. Rewriting the derivative of the tracking and observation errors yields $\dot{\nu}_i = \dot{\tilde{x}}_i - (\dot{e}_i + \dot{\zeta}_i)$. The magnitude of $\dot{\nu}_i$ is governed by a linear combination of the states coupled with the dynamic gains and the bounded nonlinearity $f_i - f_0$. For the first term, the triggering error $(\Delta_{\varepsilon_i} - \Delta_{x_i})$ is inherently bounded by $V_{ci}$ due to the event-triggering condition $\Psi_i(t) \le V_{ci}(t)$. Since $\bar{\mathbf{P}}_i$ is a constant matrix over $[t_k^i, t_{k+1}^i)$, the term $2(\Delta_{\varepsilon_i} - \Delta_{x_i})^{\top}\bar{\mathbf{P}}_i BB^\top \bar{\mathbf{P}}_i \dot{\nu}_i$ is uniformly bounded.

Based on the above discussion, there exists a finite constant $\Theta_i > 0$ such that
\begin{align}
    \dot{\Psi}_i(t) \le \Theta_i, \quad \forall t \in [t_k^i, t_{k+1}^i).
\end{align}

By integrating this inequality over the time elapsed since the last trigger, the accumulation of $\Psi_i(t)$ is constrained by $\Psi_i(t) \le \Theta_i(t - t_k^i)$. For the next event to be triggered at $t_{k+1}^i$, the triggering function must evolve from zero to intersect with the dynamic threshold, satisfying the boundary condition $\Psi_i(t_{k+1}^i) = \frac{1}{2} V_{ci}(t_{k+1}^i)$. Substituting this condition into the integration inequality yields $\Theta_i(t_{k+1}^i - t_k^i) \ge \Psi_i(t_{k+1}^i) =  \frac{1}{2}V_{ci}(t_{k+1}^i)$.
Thus, the minimum time required to cross the triggering threshold is governed by:
\begin{align}
    t_{k+1}^i - t_k^i \ge \frac{V_{ci}(t_{k+1}^i)}{2\Theta_i} := \tau_i > 0.
\end{align}
With $\Theta_i$ being finite, the inter-event time $\tau_i$ is strictly greater than zero. Therefore, Zeno behavior is successfully excluded over the prescribed time interval.
\end{proof}
\begin{table}[!ht]
\centering
\caption{Reasonable ranges of $\gamma_{ci}$ and $\gamma_{oi}$ under different system orders}
\label{tab:parameter_ranges}
\renewcommand{\arraystretch}{1.15}
\setlength{\tabcolsep}{2.2pt}
\scriptsize
\begin{tabular}{cccc}
\toprule
Order & $\xi_i$ & $\gamma_{ci}$ & $\gamma_{oi}$ \\
\midrule
$1$ 
& $0$      
& $(\max\{4,t_f\},\infty)$        
& $(2,\infty)$ \\

$2$ 
& $2.4142$ 
& $(\max\{8.8284,t_f\},\infty)$   
& $(4.5,\infty)$ \\

$3$ 
& $5.2899$ 
& $(\max\{14.5798,t_f\},\infty)$  
& $(6.6667,\infty)$ \\

$\vdots$ 
& $\vdots$ 
& $\vdots$ 
& $\vdots$ \\

$n$ 
& \makecell{$\max\{\sigma(\mathbf{E}_i\mathbf{P}_{ni}$\\
$+\mathbf{P}_{ni}\mathbf{E}_i,$\\
$\mathbf{P}_{ni})\}$}
& \makecell{$(\max\{4+2\xi_i,$\\
$t_f\},\infty)$}
& \makecell{$(2n+1-\frac{1}{n},$\\
$\infty)$} \\
\bottomrule
\end{tabular}
\end{table}

The choices of $\gamma_{ci}$ and $\gamma_{oi}$ satisfy the separation principle as illustrated in Theorem 1, where the explicit admissible ranges of the design coefficients are summarized in Table~\ref{tab:parameter_ranges}. 
% In particular, $\gamma_{ci}$ only affects the positivity of $\kappa_{ci}$, while $\gamma_{oi}$ only affects the positivity of $\kappa_{oi}$. The explicit admissible ranges of the design coefficients are summarized in Table~\ref{tab:parameter_ranges}.
\begin{remark}
Compared with \cite{ref4}, the consensus tracking of nonlinear MASs involves several design and analysis difficulties. Under the directed graph in Assumption~\ref{Ass:Graph}, the leader state is not directly available for all followers, and thus the relative tracking error $x_i-x_0$ cannot be directly reconstructed from a single local estimate. The proposed hybrid observer addresses this issue by generating $\hat{x}_i-\varepsilon_i$ as an implementable estimate of the tracking error. This design introduces the local observation error and the distributed observation error into the closed-loop tracking dynamics, which makes the unified matrix-pencil-based decoupling analysis more involved. Moreover, the local observer incorporates $f_0(\varepsilon_i,t)$ to compensate for the leader nonlinearity, which is necessary for nonlinear leader-following tracking. Finally, the event-triggered errors are handled through a triggering condition depending only on $V_{ci}$, so that no extra constraint is imposed on $\gamma_{oi}$ and the admissible ranges of $\gamma_{ci}$ and $\gamma_{oi}$ remain independent.
\end{remark}

\section{From attack-free to DoS attacks}
\subsection{Denial-of-Service Attacks}
This paper considers the case where all communication links are interrupted during DoS-active intervals, which captures intermittent communication caused by cyber attacks or adverse physical environments. For instance, in autonomous aerial vehicle (AAV) networks, agents may temporarily lose communication with all neighbors when they enter occluded regions, such as areas blocked by buildings or other obstacles. Since such interruptions are usually constrained by the attacker’s energy, bandwidth, or the duration of adverse environmental conditions, the DoS attacks are modeled by imposing standard frequency and duration constraints. Let $\{d_l\}_{l\in\mathbb{N}}$ denote the sequence of DoS off/on transition instants, $\tau_l \in \mathbb{R}_{\geq 0}$ denote the duration of the $l$th DoS attack, and $\mathcal{D}_l = [d_l,d_l+\tau_l)$ denote the corresponding attack interval. Furthermore, for any $t\ge 0$, define the set of DoS-active intervals over $[0,t]$ as
$
\Phi_{\mathcal D}(0,t)=\bigcup_{l\in\mathbb{N}}\left(\mathcal{D}_l\cap[0,t]\right),
$
and define the set of DoS-free intervals as
$
\Phi_{\mathcal H}(0,t)=[0,t]\setminus \Phi_{\mathcal D}(0,t).
$
Moreover, DoS attacks in this paper satisfy the following widely adopted assumption:
\begin{assumption}(\cite{ref-DOS})\label{ass_dos}
For any $t\ge 0$, the DoS frequency and duration satisfy
\begin{align}
    &n_d(0,t)\le \frac{t}{\tau_f}+k_{df},
    \label{eq:dos_frequency} \\
    &T_d(0,t)\le \frac{t}{\tau_d}+k_{dT},
    \label{eq:dos_duration}
\end{align}
where $\tau_f>0$, $\tau_d>1$, $k_{df}\ge 0$, and $k_{dT}\ge 0$ are constants. Here, $n_d(0,t)$ denotes the number of DoS off/on transitions over $[0,t]$, and $T_d(0,t)=|\Phi_{\mathcal D}(0,t)|$ denotes the total duration of DoS intervals over $[0,t]$.
\end{assumption}

\begin{remark}
The constants $k_{df}$ and $k_{dT}$ describe the possible initial attack capability over a finite interval. The parameter $\tau_f$ gives an average lower bound on the interval between two consecutive DoS activations, while $\tau_d$ bounds the average proportion of time during which the network is under DoS attacks. 
\end{remark}

\subsection{Consensus tracking under DoS attacks}
When DoS attacks are activated, all communication links are interrupted as shown in Fig. \ref{fig:digraph}, and the neighbors' information becomes unavailable. To maintain the estimation of the leader's state, the distributed observer switches to an autonomous prediction mode. The distributed observer \eqref{dis_observer} is modified as follows
\begin{equation}
\label{eq:switched_observer}
\dot{\epsilon}_{i} = \begin{cases}
\frac{\sum_{j=0}^{N}a_{ij}\left[\dot{\epsilon}_{j} - \frac{\varrho_i}{2}(\epsilon_{i}-\epsilon_{j})\right]}{\sum_{j=0}^{N}a_{ij}}, & t \in \Phi_{\mathcal{H}} \\
\mathbf{A}\epsilon_{i} +f_0, & t \in \Phi_{\mathcal{D}}
\end{cases}
\end{equation}
For $t \in \Phi_{\mathcal{D}}$, the derivative of $\zeta_i$ is
\begin{equation} \label{new zeta i}
\dot{\zeta}_i = \dot{\epsilon}_i - \dot{x}_0 = \mathbf{A}\zeta_i .
\end{equation}

In \cite{ref-Fenggang-Auto}, the authors consider a scenario where a DoS attack occurs exactly at $t_f$, preventing consensus tracking from being achieved at the originally prescribed instant, as illustrated in Fig. \ref{fig:DoS_time}. 
\begin{figure}[!ht]
    \centering
    \includegraphics[width=0.76\columnwidth]{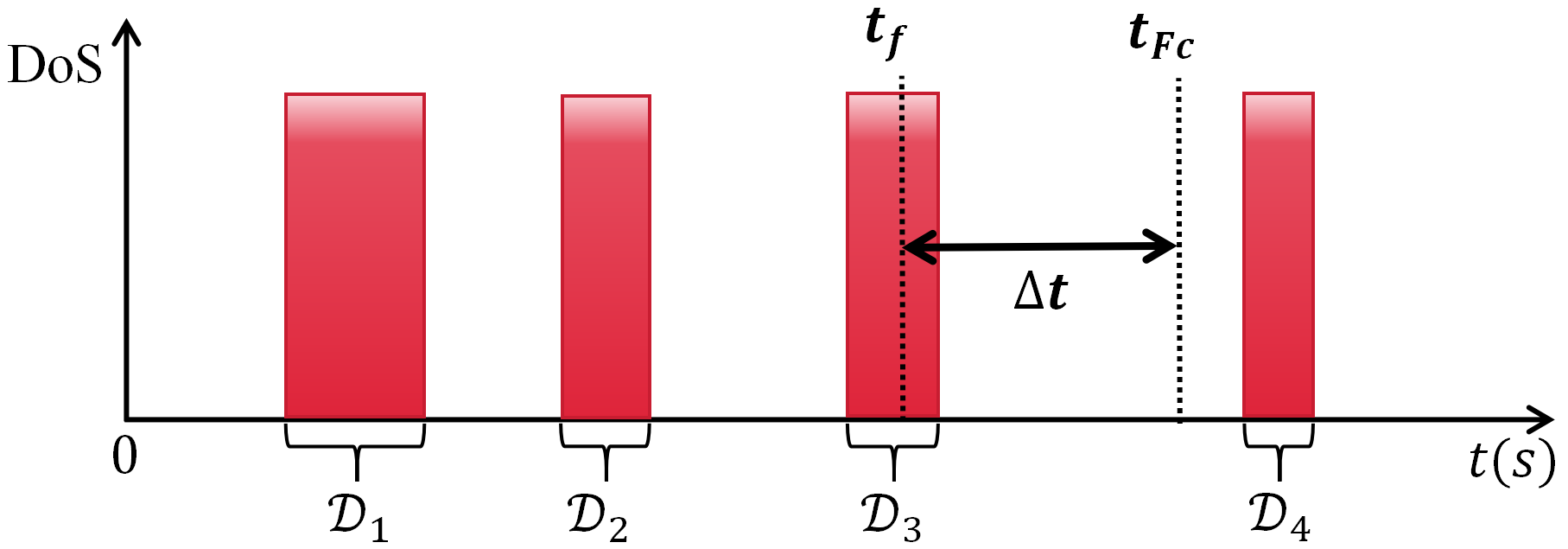}
    \caption{Prescribed-time extension under DoS attacks, where $\mathcal{D}_l$ denotes the
$l$th DoS attack interval. Here, $t_f$ is the prescribed time without DoS
attacks, while $t_{Fc}$ is the extended prescribed time under DoS attacks. The
minimum required extension $\Delta t$ is given in \eqref{eq:Delta_t_explicit}.}
    \label{fig:DoS_time}
\end{figure}
To overcome this limitation, an extended prescribed time $t_{Fc} = t_f + T_d(0,t_{Fc})$ is introduced to replace $t_f$. Under Assumption \ref{ass_dos}, the finite number of DoS transitions guarantees the existence of at least one $t_{Fc}$ residing in a DoS-free interval. However, this construction remains implicit because the exact cumulative attack duration $T_d(0,t_{Fc})$. i.e., $|\mathcal{D}_1|+|\mathcal{D}_2|+|\mathcal{D}_3|$ in Fig. \ref{fig:DoS_time}, is generally unknown a priori. Furthermore, the framework in \cite{ref-Fenggang-Auto} is restricted to linear systems with full-state measurements, leaving nonlinear follower dynamics and output-feedback implementations unaddressed. In this section, while $t_{Fc}$ is similarly adopted as the extended prescribed time, the required time extension $\Delta t = t_{Fc}-t_f$ can be explicitly determined. Accordingly, an explicit upper bound on the allowable DoS attacks is established, and the time-varying scaling function is redefined over the extended interval as
\begin{equation}
\alpha(t) = \frac{1}{t_{Fc} - t}, \quad t \in [0, t_{Fc}).
\label{eq:alpha_tfc}
\end{equation}

\begin{theorem}\label{Theorem 2}
	\textit{(A separation principle for consensus tracking under DoS attacks)}
	Consider the MAS consisting of a leader~\eqref{eq:leader} and \(N\) followers~\eqref{eq:system} under Assumptions~\ref{ass:nonlinearity}--\ref{ass_dos}. Suppose that the hybrid observer, composed of the switched distributed observer~\eqref{eq:switched_observer} and the local observer~\eqref{eq:local observer}, together with the event-triggered controller~\eqref{ui_ETC} under the triggering condition~\eqref{triggering condition}, is applied. Then,  the following results hold: $i)$ the  observer error convergences to zero within the prescribed time $t_{F_c}$ and the prescribed-time consensus tracking is achieved in the sense of Definition~1; $ii)$ Zeno behavior is excluded; $iii) $ there exist positive constants \(\hat{m}_i\), \(\hat{M}_i\), \(\hat{m}_{u_i}\), and \(\hat{M}_{u_i}\) such that, for \(\chi_i := [\tilde{x}_i^\top, (e_i+\zeta_i)^\top]^\top\),
		\begin{align}
			|\chi_i(t)| &\le (t_{Fc}-t)^{\hat{m}_i}\hat{M}_i|\chi_i(0)|, \nonumber\\
			|u_i(t)| &\le (t_{Fc}-t)^{\hat{m}_{u_i}}\hat{M}_{u_i}|\chi_i(0)|; \nonumber
		\end{align}     and $iv)$ the control scheme is resilient to DoS attacks provided that the maximum attack duration satisfies
		\begin{equation}\label{constraint DoS duration}
			\frac{1}{\tau_d} + \frac{k_{dT}}{t_{Fc}} < \frac{\Upsilon_i}{\lambda_{Di}\gamma_{oi}},
		\end{equation}
		where \(\lambda_{Di} > 0\) is a constant  and \(\Upsilon_i\) is defined in~\eqref{dot V con reduce}.
	Furthermore, the design of the output-ffedback controller and the hybrid observer satisfies the separation principle in the sense that the coefficients \(\gamma_{ci}\) and \(\gamma_{oi}\) associated with the controller and observer gains can be independently selected from  the sets defined in~\eqref{Omega_ci oi}.
\end{theorem}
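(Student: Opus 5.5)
We reuse the Lyapunov function $V_i=V_{ci}+c_iV_{oi}$ from the proof of Theorem~\ref{Theorem 1}, now with the scaling function \eqref{eq:alpha_tfc}, so that $r_i=\gamma_{ci}/(t_{Fc}-t)$ and $\varrho_i=\gamma_{oi}/(t_{Fc}-t)$. The plan is to treat the closed loop as a switched system with two modes. On the DoS-free set $\Phi_{\mathcal H}$ the distributed observer coincides with \eqref{dis_observer}, so \eqref{eq:dev zeta} holds. Steps 1--3 of the proof of Theorem~\ref{Theorem 1} then apply verbatim with $t_f$ replaced by $t_{Fc}$, which gives $\dot V_i\le -\Upsilon_i\alpha V_i+s_iV_i$. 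Since $\gamma_{ci}>\max\{4+2\xi_i,t_f\}$, the condition $r_i\ge 1$ needs $\gamma_{ci}\ge t_{Fc}$. This is the mild constraint on the prescribed time mentioned in the introduction. I would either assume it or note that $r_i\ge1$ is only needed where $\mathbf L_{ci}\mathbf A_f\le_e\mathbf A_f\mathbf L_{ci}$ is used, and enlarge $s_i$ by a constant otherwise.

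On the attack set $\Phi_{\mathcal D}$, $\zeta_i$ obeys $\dot\zeta_i=\mathbf A\zeta_i$ from \eqref{new zeta i} instead of $-\tfrac{\varrho_i}{2}\zeta_i$. The local observer, the controller and the triggering rule \eqref{triggering condition} are all local, so the $V_{ci}$ part of the estimate and the $e_i$ part of $V_{oi}$ are unchanged. The only new term is in $\zeta_i$:
\[
\frac{d}{dt}\big(\varrho_i^{2n+1}\zeta_i^\top\mathbf Q_i^{-1}\zeta_i\big)\le \frac{(2n+1-1/n)\varrho_i}{\gamma_{oi}}\varrho_i^{2n+1}\zeta_i^\top\mathbf Q_i^{-1}\zeta_i+\varrho_i^{2n+1}\zeta_i^\top(\mathbf Q_i^{-1}\mathbf A+\mathbf A^\top\mathbf Q_i^{-1})\zeta_i .
\]
I would rewrite \eqref{eq:PLE2} as $\mathbf Q_i^{-1}\mathbf A+\mathbf A^\top\mathbf Q_i^{-1}=C^\top C-\varrho_i\mathbf Q_i^{-1}$. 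The term $\varrho_i^{2n+1}\zeta_i^\top C^\top C\zeta_i$ is bounded by a resilient matrix pencil. Using $\mathbf Q_i^{-1}=\varrho_i^{-(2n-1)}\mathbf L_{oi}\mathbf Q_{ni}^{-1}\mathbf L_{oi}$ and $C\mathbf L_{oi}^{-1}=\varrho_i^{-(n-1)}C$, one gets $\varrho_i^{2n+1}\zeta^\top C^\top C\zeta=\varrho_i\cdot\varrho_i^{2n+1}\zeta^\top\mathbf Q_i^{-1/2}(\cdot)\mathbf Q_i^{-1/2}\zeta$ scaled by the constant $\max\sigma(C^\top C,\mathbf Q_{ni}^{-1})$. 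The cross terms with $\tilde x_i$ that appear through the $f_{0i}$ mismatch and through $e_i+\zeta_i$ are absorbed by re-solving the pencils of Steps 1--2 with $\zeta_i$ as a separate block. The result is a constant $\lambda_{Di}>0$ with
\[
\dot V_i\le \big(\lambda_{Di}\gamma_{oi}-\Upsilon_i\big)\alpha V_i+s_iV_i \ \text{on }\Phi_{\mathcal D},\qquad \dot V_i\le -\Upsilon_i\alpha V_i+s_iV_i\ \text{on }\Phi_{\mathcal H}.
\]
Equivalently, the attack mode gives growth rate at most $\lambda_{Di}\varrho_i=\lambda_{Di}\gamma_{oi}\alpha$. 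I expect this step to be the main obstacle: one must make the growth rate proportional to $\alpha$ with a constant independent of $\gamma_{ci}$, so that separation survives. Since $V_i$ is continuous at switching instants, no jump terms arise. Event resets affect only $\Psi_i$, not $V_i$.

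Integrating, for $t<t_{Fc}$ I would obtain
\[
V_i(t)\le e^{s_it_{Fc}}V_i(0)\exp\Big(-\Upsilon_i\!\int_0^t\!\alpha\,d\tau+\lambda_{Di}\gamma_{oi}\!\int_{\Phi_{\mathcal D}(0,t)}\!\alpha\,d\tau\Big).
\]
The key bound is on the attacked integral of $\alpha$. Since $\alpha$ is increasing, this integral is largest when the attack time is pushed toward $t$. With Assumption~\ref{ass_dos} and a Lemma 1-type argument from \cite{ref4} applied to the density $1/\tau_d+k_{dT}/t_{Fc}$, it is bounded by $(1/\tau_d+k_{dT}/t_{Fc})\int_0^t\alpha\,d\tau$ plus a constant. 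This gives $V_i(t)\le \bar Z_i\big((t_{Fc}-t)/t_{Fc}\big)^{\Upsilon_i-\lambda_{Di}\gamma_{oi}(1/\tau_d+k_{dT}/t_{Fc})}V_i(0)$. The exponent is positive exactly under \eqref{constraint DoS duration}, which proves $iv)$. From this, $i)$ and $iii)$ follow as in \eqref{chi bound} and \eqref{ui bound}, with $\hat m_i$, $\hat m_{u_i}$ defined from the reduced exponent. The same Lyapunov bounds give that $\Psi_i$ has bounded derivative on each inter-event interval, so Part~2 of Theorem~\ref{Theorem 1} repeats and excludes Zeno behavior, proving $ii)$. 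Finally, $t_{Fc}=t_f+\Delta t$ is made explicit by solving $\Delta t\ge T_d(0,t_{Fc})$ with \eqref{eq:dos_duration}: $\Delta t=(t_f/\tau_d+k_{dT})/(1-1/\tau_d)$, well defined since $\tau_d>1$. The admissible sets \eqref{Omega_ci oi} are unchanged, because $\gamma_{ci}$ only enters $\kappa_{ci}$ and $\gamma_{oi}$ only enters $\kappa_{oi}$ and $\lambda_{Di}$. This gives the separation principle.
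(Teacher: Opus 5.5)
Your argument follows the paper's proof essentially line for line. You use the same $V_i=V_{ci}+c_iV_{oi}$, the same substitution of \eqref{eq:PLE2} to isolate the single DoS-induced term $c_i\varrho_i^{2n+1}\zeta_i^\top C^\top C\zeta_i$, the same $\lambda_{Di}=\max\{\sigma(C^\top C,\mathbf Q_{ni}^{-1})\}$, and the same switched inequality, comparison step and Zeno argument. Re-solving the Step 1--2 pencils is unnecessary, since those coupling terms are identical in both modes. Your remark that $\alpha=1/(t_{Fc}-t)$ requires $\gamma_{ci}\ge t_{Fc}$ for $r_i\ge1$ (or a factor $(t_{Fc}/\gamma_{ci})^{n-1}$ absorbed into $\theta_i$) is correct, and the paper omits it.

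The genuine gap is the bound $\int_{\Phi_{\mathcal D}(0,t)}\alpha\,d\tau\le(1/\tau_d+k_{dT}/t_{Fc})\int_0^t\alpha\,d\tau+\mathrm{const}$. Your own observation, that because $\alpha$ is increasing the worst case pushes the attack toward $t$, is exactly what defeats it. \eqref{eq:dos_duration} only limits the cumulative duration measured from $0$. So, already with $k_{dT}=0$, a single attack on $[a,t_{Fc})$ with $a=t_{Fc}(1-1/\tau_d)$ is admissible, and $t_{Fc}$ still lies in a DoS-free interval. For $t\in[a,t_{Fc})$ one gets $\int_{\Phi_{\mathcal D}(0,t)}\alpha\,d\tau=\ln\frac{t_{Fc}/\tau_d}{t_{Fc}-t}$, against $\int_0^t\alpha\,d\tau=\ln\frac{t_{Fc}}{t_{Fc}-t}$. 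The $\alpha$-weighted ratio therefore tends to $1$, and its excess over $\bar\delta\int_0^t\alpha\,d\tau$ diverges for any $\bar\delta<1$.

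This is not merely a weak estimate. On that interval $\zeta_i(t)=e^{\mathbf A(t-a)}\zeta_i(a)$, which is nonzero at $t_{Fc}$ whenever $\zeta_i(0)\neq0$. So $i)$ actually fails for this pattern, even though \eqref{constraint DoS duration} holds for large $\tau_d$. Lemma 1 of \cite{ref4} is only a comparison lemma for $\dot V\le-\mu\alpha V+sV$ and cannot supply this bound. The paper's proof makes the same unproved assertion ($\delta_i^\alpha(t)\le\delta_D(t_{Fc})$ ``based on the definition of $\alpha$''), so you inherit the gap rather than introduce it.

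To close it you need a hypothesis that controls attacks near $t_{Fc}$ on the logarithmic time scale set by $\alpha$. Two options work:
\begin{itemize}
\item \textbf{Attack-free margin.} If the last attack ends at $t_{Fc}-\epsilon_0$ with $\epsilon_0>0$, then $\int_{\Phi_{\mathcal D}(0,t)}\alpha\,d\tau\le\ln(1+T_d(0,t_{Fc})/\epsilon_0)$ for all $t<t_{Fc}$. You then obtain the undamped exponent $\Upsilon_i$ with an $\epsilon_0$-dependent constant, and \eqref{constraint DoS duration} plays no role.
\item \textbf{Weighted-duration assumption.} Assume directly $\int_{\Phi_{\mathcal D}(0,t)}\alpha\,d\tau\le\bar\delta\int_0^t\alpha\,d\tau+c$, which is exactly what your integration step consumes. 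Then replace \eqref{constraint DoS duration} by $\bar\delta<\Upsilon_i/(\lambda_{Di}\gamma_{oi})$.
\end{itemize}
Strengthening \eqref{eq:dos_duration} to a sliding-window bound does not help by itself. A positive offset still allows an attack to run up to $t_{Fc}$, and a zero offset, since $\tau_d>1$, rules out attacks altogether.
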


\begin{proof}
The proof is divided into two parts.

    \textbf{Part 1.} 
Similarly, the Lyapunov candidate function under DoS attacks is designed as $V_{Di} = V_{ci} + c_i V_{oi}$, where $V_{ci}$ and $V_{oi}$ are the same as those in Section IV. 

During DoS attacks ($t \in \Phi_D$), the distributed observer switches to the autonomous prediction mode, yielding \eqref{new zeta i}. Consequently, taking the time derivative of $c_i V_{oi}$ introduces the coupled term
\begin{align}
    2 c_i\varrho_i^{2n+1} \zeta_i^{\top} \mathbf{Q}_i^{-1} \dot{\zeta}_i = c_i\varrho_i^{2n+1} \zeta_i^{\top} (\mathbf{Q}_i^{-1}A + A^{\top}\mathbf{Q}_i^{-1}) \zeta_i.
\end{align}
Substituting \eqref{eq:PLE2} into the derivative of $c_i V_{oi}$ yields
\begin{equation}
2 c_i\varrho_i^{2n+1} \zeta_i^{\top} \mathbf{Q}_i^{-1} \dot{\zeta}_i = c_i\varrho_i^{2n+1} \zeta_i^{\top} C^{\top}C \zeta_i - c_i \varrho_i^{2n+2} \zeta_i^{\top} \mathbf{Q}_i^{-1} \zeta_i.
\nonumber
\end{equation}
Note that the second term on the right-hand side corresponds exactly to the term derived in the attack-free scenario, where $\dot{\zeta}_i = -\frac{\varrho_i}{2}\zeta_i$. Therefore, following the derivation process detailed in Section IV, the upper bound of $\dot{V}_{Di}$ incurs only an additional positive term, $c_i\varrho_i^{2n+1}\zeta_i^{\top} C^{\top}C \zeta_i$, which is induced by the DoS attacks. Thus, the derivative of $V_{Di}$ yields
\begin{equation}
\dot{V}_{Di} \le -\Upsilon_i \alpha(t)V_{Di} + s_i V_{Di} + c_i\varrho_i^{2n+1}\zeta_i^{\top} C^{\top}C \zeta_i,
\label{eq:V_D_derivative step 1}
\end{equation}
where, as in Section~4,
\[
\begin{aligned}
\Upsilon_i &= \min\left\{
\kappa_{ci}\gamma_{ci},
\frac{\kappa_{oi}\gamma_{oi}}{c_i}
\right\},\\
s_i &= \max\left\{
\sigma(\mathbf{Q}_{i,s2},-\mathbf{Q}_{i,s1})
\right\}.
\end{aligned}
\]
To address the additional term $c_i\varrho_i^{2n+1}\zeta_i^{\top} C^{\top}C \zeta_i$ resulting from the DoS-induced switching, we employ the resilient matrix-pencil formulation. By virtue of Lemma \ref{lemma_matrix pencil}, we select a positive scalar $\lambda_{Di} = \max\{\sigma(C^{\top}C,\mathbf{Q}_{ni}^{-1})\}$ such that the following generalized inequality $C^{\top}C \le \lambda_{Di} \mathbf{Q}_{ni}^{-1}$ holds.
Based on \eqref{Lic Lio} and the definition of $C$, we have $\mathbf{L}_{oi} C^{\top}C \mathbf{L}_{oi} = \varrho_i^{2n-2} C^{\top}C$. Then, applying \eqref{eq:7} yields $C^{\top}C \le \lambda_{Di} \varrho_i \mathbf{Q}_i^{-1}$, which implies
\begin{align}
    c_i\varrho_i^{2n+1}\zeta_i^{\top} C^{\top}C \zeta_i \le \lambda_{Di} \varrho_i \left( c_i \varrho_i^{2n+1} \zeta_i^{\top} \mathbf{Q}_i^{-1} \zeta_i \right) \le \lambda_{Di} \varrho_i c_i V_{oi}. \nonumber
\end{align}
Consequently, we obtain
\begin{equation}
\dot{V}_{Di} \le (-\Upsilon_i+ \lambda_{Di}\gamma_{oi}) \alpha(t)V_{Di}  + s_i V_{Di}.
\label{eq:V_D_derivative step 2}
\end{equation}
Combining with the derivative of the Lyapunov function during 
$t \in \Phi_{\mathcal{H}}$ in \eqref{dot V con reduce}, the piecewise 
differential inequality for the global Lyapunov function over $t \ge 0$ 
can be formulated as
\begin{align}
\dot{V}_i
\le
\varpi_i(t)\alpha(t)V_i(t)
+
s_iV_i(t),
\label{eq:piecewise_Vi}
\end{align}
where the switching coefficient $\varpi_i(t)$ is defined as
\begin{align}
\varpi_i(t)
=
\begin{cases}
-\Upsilon_i, 
& t \in \Phi_{\mathcal{H}},\\
\lambda_{Di}\gamma_{oi}-\Upsilon_i, 
& t \in \Phi_{\mathcal{D}}.
\end{cases}
\label{eq:mu_i_def}
\end{align}
Integrating the piecewise dynamics over the interval $[0, t]$ for any $t \in [0, t_{Fc})$, we obtain
\begin{equation}
V_i(t) \le V_i(0) e^{s_i t_{Fc}} \exp\left( \int_0^t \varpi_i(\tau)\alpha(\tau) d\tau \right).
\end{equation}
Notice that the integral of the switching coefficient relative to the scaling function can be expanded as
\begin{equation}
\int_0^t \varpi_i(\tau)\alpha(\tau)d\tau = -\Upsilon_i \int_0^t \alpha(\tau)d\tau + \lambda_{Di}\gamma_{oi} \int_{\Phi_{\mathcal{D}}(0,t)} \alpha(\tau)d\tau.\nonumber
\end{equation}
Define an $\alpha$-weighted DoS ratio over $[0, t]$ as $\delta_i^\alpha(t) := {\int_{\Phi_{\mathcal{D}}(0,t)} \alpha(\tau) d\tau}/{\int_0^t \alpha(\tau) d\tau}$, allowing us to rewrite the above inequality as
\begin{equation}
\int_0^t \varpi_i(\tau)\alpha(\tau)d\tau = \big(-\Upsilon_i + \lambda_{Di}\gamma_{oi} \delta_i^\alpha(t)\big) \int_0^t \alpha(\tau)d\tau.
\label{eq:integral_with_delta_alpha}
\end{equation}
Based on the definition of $\alpha(t)$, the $\alpha$-weighted DoS ratio is bounded by a maximum physical DoS duty cycle $\delta_D(t_{Fc}):={T_d(0,t_{Fc})}/{t_{Fc}}$, i.e., $\delta_i^\alpha(t) \le \delta_D(t_{Fc})$. Substituting this inequality into \eqref{eq:integral_with_delta_alpha} yields:
\begin{equation}
\int_0^t \varpi_i(\tau)\alpha(\tau)d\tau \le \big(-\Upsilon_i + \lambda_{Di}\gamma_{oi} \delta_D(t_{Fc})\big) \int_0^t \alpha(\tau)d\tau.
\end{equation}
According to Assumption \ref{ass_dos}, it holds that $\delta_D(t_{Fc})\le \frac{1}{\tau_d} + \frac{k_{dT}}{t_{Fc}}$. Combining with the constraint \eqref{constraint DoS duration}, we can obtain a positive constant $\hat{\Upsilon}_{i} = \Upsilon_i - \lambda_{Di}\gamma_{oi}\big(\frac{1}{\tau_d} + \frac{k_{dT}}{t_{Fc}}\big) > 0$. 
Based on Lemma 1 in \cite{ref4}, it yields
\begin{equation}\label{Vi  DOS upper}
V_i(t) \le \left( \frac{t_{Fc}-t}{t_{Fc}} \right)^{\hat{\Upsilon}_{i}} e^{s_i t} V_i(0),
\qquad t\in[0,t_{Fc}).
\end{equation}
Following the same procedure as the proof of Theorem 1, the positive constants $\hat{m}_i = \frac{\hat{\kappa}_{1i}}{2} + 1$, $\hat{M}_i$, $\hat{m}_{u_i}$, and $\hat{M}_{u_i}$ can be successfully established, guaranteeing prescribed-time consensus tracking and all closed-loop signals are uniformly bounded.

\textbf{Part 2.} Since $V_i(t)$ remains bounded as shown in \eqref{Vi DOS upper}, and the event-triggering mechanism of the controllers is unaffected by DoS attacks, the exclusion of Zeno behavior can be proven using the exact same rationale as in Theorem 1. Therefore, the detailed proof is omitted for brevity.
\end{proof}
\begin{remark}
The matrix-pencil technique plays a key role in deriving the admissible DoS duty-cycle constraint. Specifically, it provides explicit generalized-eigenvalue conditions for dominating the coupling terms induced by DoS attacks and provides a constant $\lambda_{Di}$ to establish an explicit DoS duration constraint \eqref{constraint DoS duration}. 
\end{remark}
\begin{remark}
     Based on the resilience constraint in \eqref{constraint DoS duration}, one can deduce the lower bound for the extended time as 
     \begin{align}
         t_{Fc} > k_{dT} \big/ \big(\frac{\Upsilon_i}{\lambda_{Di}\gamma_{oi}} - \frac{1}{\tau_d}\big).
     \end{align} 
     Provided in the proof of Theorem 2 that the necessary condition ${1}/{\tau_d} < {\Upsilon_i}/{\lambda_{Di}\gamma_{oi}}$ is satisfied, the extension interval $\Delta t$ can be explicitly designed as
\begin{equation}
\Delta t > \frac{k_{dT} \lambda_{Di} \gamma_{oi} \tau_d}{\Upsilon_i \tau_d - \lambda_{Di}\gamma_{oi}} - t_f.
\label{eq:Delta_t_explicit}
\end{equation}
Note that this constraint can be satisfied by properly adjusting the parameters
of the hybrid observer, thereby guaranteeing the realization of the separation
principle. 
\end{remark}

\section{Numerical examples}
\begin{figure}[!h]
    \centering
    \includegraphics[width=0.8\columnwidth]{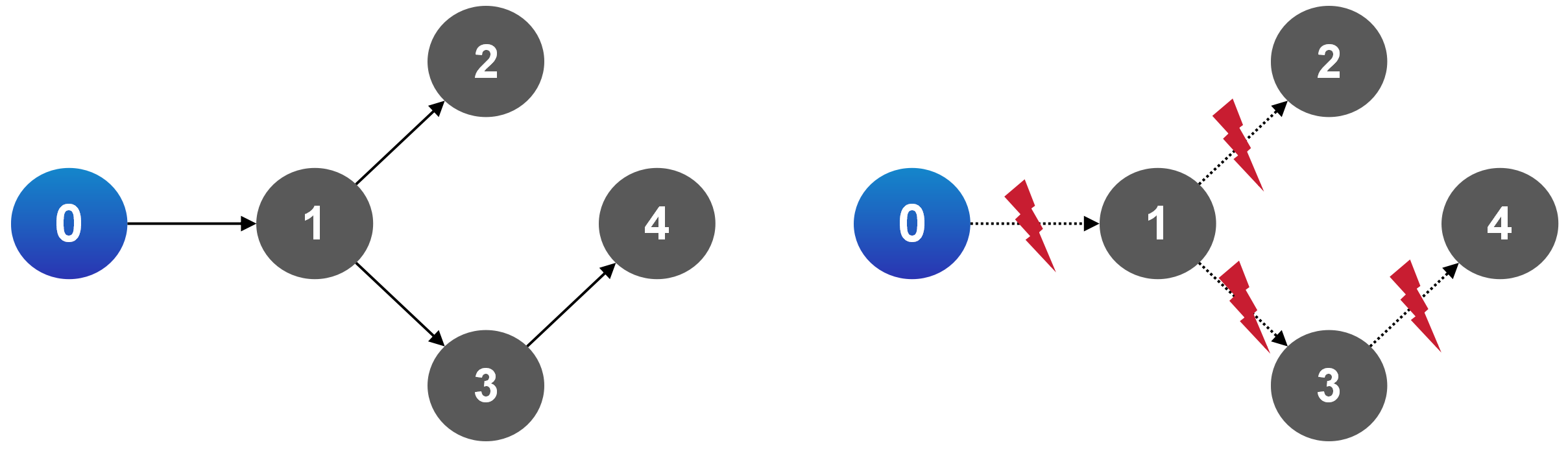}
    \caption{The communication topology without and under DoS attacks}
    \label{fig:digraph}
\end{figure}
We consider an MAS with one leader and four followers, whose communication network is described by a directed graph shown as Fig. \ref{fig:digraph}. The dynamic of the $i$th agent is described by a single-link robotic manipulator model as follows (\cite{ref-ye2025distributed}):
\begin{equation}\label{simu model}
J_i \ddot{q}_i + K_i \dot{q}_i + m_i g d_i \sin(q_i) = u_i,
\quad i = 0,1,\ldots,4.
\end{equation}
 Here, $q_i$ and $\dot{q}_i$ denote the angular position and angular velocity, respectively. The parameter $J_i$ represents the moment of inertia, $K_i$ is the damping coefficient, $d_i$ is the distance from the joint axis to the center of mass, $m_i$ denotes the mass of the link, and $g$ is the gravitational acceleration. In the simulation, the parameters are selected as $J_0 = J_4 = 8.5$, $J_1 = J_2 = J_3 = 10$, $K_0 = K_3 = 1.4$, $K_1 = K_2 = K_4 = 1.6$, $d_0 = d_1 = 1$, $d_2 = 0.8$, $d_3 = d_4 = 1.2$, $m_0 = m_2 = 1.3$, $m_1 = m_3 = m_4 = 1$, and $g = 9.8$. By defining $x_i = [x_{i,1},x_{i,2}]^\top = [q_i,\dot{q}_i]^\top$ and $y_i = q_i$, the dynamic in \eqref{simu model} can be recast into the form of \eqref{eq:system}, where $\mathbf{A} = [0,1;0,0]$, $B = [0;1]$, $C = [1;0]$, and $f_i = \bigl[0,-(K_i/J_i)\dot{q}_i-(m_i g d_i/J_i)\sin(q_i)\bigr]^\top$. For the leader, $u_0 = 0$, and $f_0 = (\cdot,t)$ is available for the overall MAS. For the followers, namely $i=1,\ldots,4$, the nonlinear terms $f_i$ are unknown functions. The initial states are set as: $x_0 = [0,0]^\top$, $x_1 = [1,1]^\top$, $x_2 = [2,2]^\top$, $x_3 = [3,3]^\top$ and $x_4 = [4,4]^\top$.

Two situations are considered in this section to validate the effectiveness of Theorem 1 and Theorem 2, respectively.

\subsection{Consensus tracking without DoS attacks}
According to Table \ref{tab:parameter_ranges}, we select $\gamma_{oi} = 10$ and $\gamma_{ci} = 15$ for all followers in this case. The prescribed time is designed as $t_f = 3s$.

\begin{figure}[!htbp]
	\centering
	{
		\includegraphics[width=0.82\columnwidth]{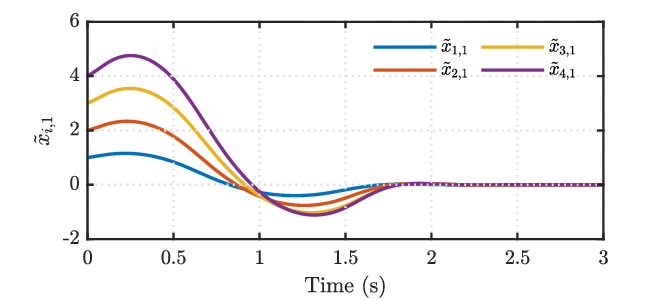}
		\label{fig:tracking_1_no_dos}
	}
	
	\vspace{-1.2em}
	{
		\includegraphics[width=0.82\columnwidth]{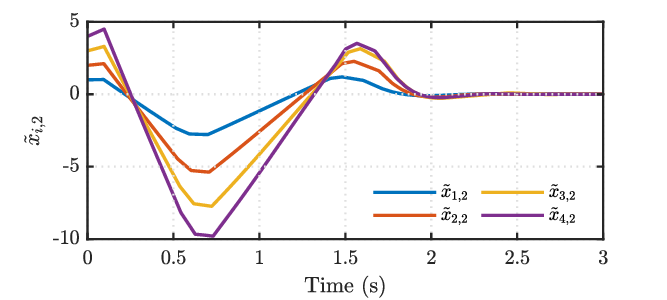}
		\label{fig:tracking_2_no_dos}
	}
	\caption{Tracking errors $\tilde{x}_i$ in the absence of DoS attacks.}
	\label{fig:no_dos_tracking}
\end{figure}

\begin{figure}[!htbp]
	\centering
	{
		\includegraphics[width=0.82\columnwidth]{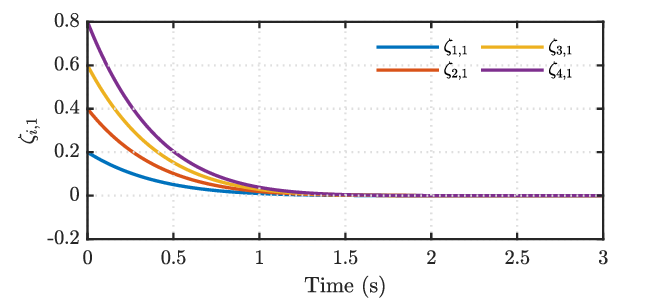}
		\label{fig:zeta_1_no_dos}
	}
	
	\vspace{-1.2em}
	{
		\includegraphics[width=0.82\columnwidth]{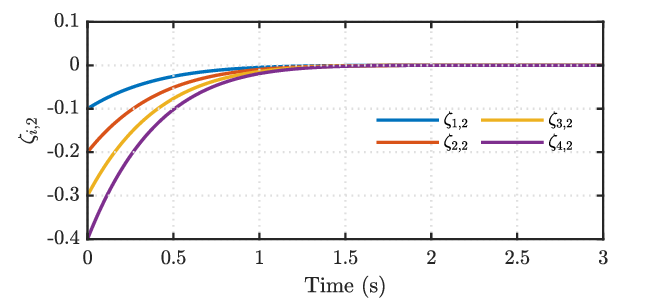}
		\label{fig:zeta_2_no_dos}
	}
	\caption{Distributed observation errors $\|\zeta_i\|$ in the absence of DoS attacks.}
	\label{fig:no_dos_distributed}
\end{figure}

\begin{figure}[!htbp]
	\centering
	{
		\includegraphics[width=0.82\columnwidth]{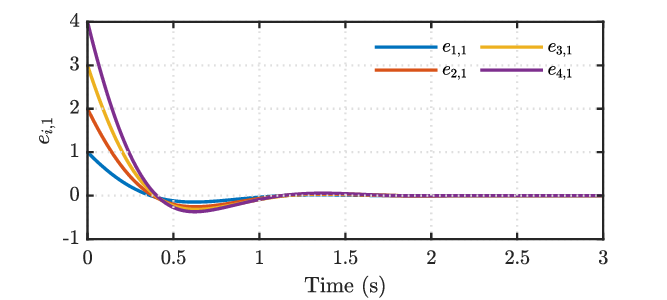}
		\label{fig:e_1_no_dos}
	}
	
	\vspace{-1.2em}
	{
		\includegraphics[width=0.82\columnwidth]{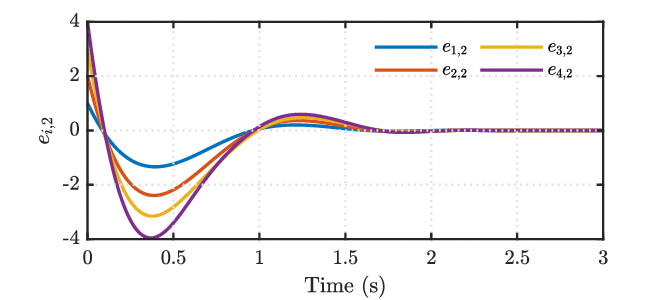}
		\label{fig:e_2_no_dos}
	}
	\caption{Local observation errors $e_i$ in the absence of DoS attacks.}
	\label{fig:no_dos_local}
\end{figure}

\begin{figure}[!htbp]
	\centering
	\includegraphics[width=0.82\columnwidth]{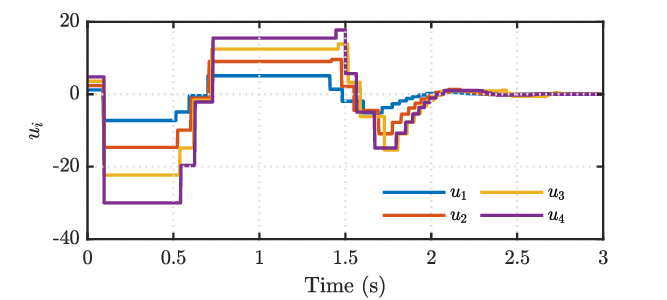}
	\caption{Control inputs $u_i$ in the absence of DoS attacks.}
	\label{fig:no_dos_control}
\end{figure}

As shown in Fig.~\ref{fig:no_dos_tracking}--Fig.~\ref{fig:no_dos_local}, the tracking errors 
$\tilde x_i$, the distributed observation errors $\zeta_i$, and the local 
observation errors $e_i$ all approach zero before the prescribed time 
$t_f=3$s. Consequently, the leader-follower consensus tracking objective 
in the absence of DoS attacks is achieved. As shown in Fig.~\ref{fig:no_dos_control}, the control inputs are piecewise constant due to the event-triggered mechanism. Meanwhile, all 
closed-loop signals remain ultimately bounded, which is consistent with Theorem~\ref{Theorem 1}. Moreover, the exclusion of Zeno behavior is demonstrated in Table \ref{tab:sampling_times_comparison}, where the simulation step size is $dt=1.0\times10^{-4}$ s.

\begin{table}[!t]
	\centering
	\caption{Comparison of Sampling Times and Minimum Inter-Event Times}
	\label{tab:sampling_times_comparison}
	\renewcommand{\arraystretch}{1.08}
	\setlength{\tabcolsep}{1.7pt}
	\scriptsize
	\begin{tabular}{cccccc}
		\toprule
		Scenario & Type & $u_1$ & $u_2$ & $u_3$ & $u_4$ \\
		\midrule
		\multirow{3}{*}{No DoS}
		& TT 
		& $30000$ & $30000$ & $30000$ & $30000$ \\
		& ET 
		& $\mathbf{97}$ & $\mathbf{103}$ & $\mathbf{94}$ & $\mathbf{102}$ \\
		& \makecell{Min.\\IET}
		& $6.0{\times}10^{-4}$ 
		& $5.0{\times}10^{-4}$ 
		& $6.0{\times}10^{-4}$ 
		& $5.0{\times}10^{-4}$ \\
		\midrule
		\multirow{3}{*}{DoS}
		& TT 
		& $41500$ & $41500$ & $41500$ & $41500$ \\
		& ET 
		& $\mathbf{98}$ & $\mathbf{68}$ & $\mathbf{91}$ & $\mathbf{78}$ \\
		& \makecell{Min.\\IET}
		& $7.0{\times}10^{-4}$ 
		& $6.0{\times}10^{-4}$ 
		& $5.0{\times}10^{-4}$ 
		& $6.0{\times}10^{-4}$ \\
		\bottomrule
	\end{tabular}
	\vspace{1mm}
	
	\raggedright
	\scriptsize
	TT: time-triggering; ET: event-triggering; IET: inter-event time.
\end{table}
\subsection{Consensus tracking under DoS attacks}

Fig.~\ref{fig:dos_tracking}--Fig. \ref{fig:dos_control} present the simulation results under DoS attacks.
The shaded regions of all figures denote the DoS-active intervals, during which all 
communication links are unavailable, and the distributed observer switches from 
the communication-based mode to the autonomous prediction mode in 
\eqref{eq:switched_observer}. In this simulation, the parameters are selected as $\gamma_{oi} = 10$ and $\gamma_{ci} = 15$ for all followers. The normal prescribed time $t_f$ is $3$s. The extended prescribed time is designed as $t_{Fc} = 4.15s$, and the total duration of DoS attacks is selected as $T_D(0,t_{Fc}) = 1.25s$, under which the DoS duty cycle constraint \eqref{constraint DoS duration} is satisfied, since $\lambda_{Di} = 2.0$ and $\Upsilon_i = 4.2$.

\begin{figure}[!htbp]
	\centering
	{
		\includegraphics[width=0.82\columnwidth]{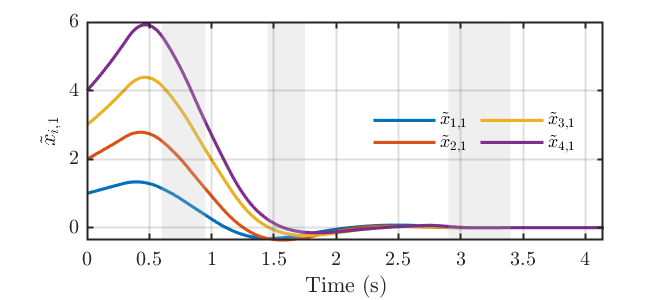}
		\label{fig:tracking_1_dos}
	}
	
	\vspace{-1.2em}
	{
		\includegraphics[width=0.82\columnwidth]{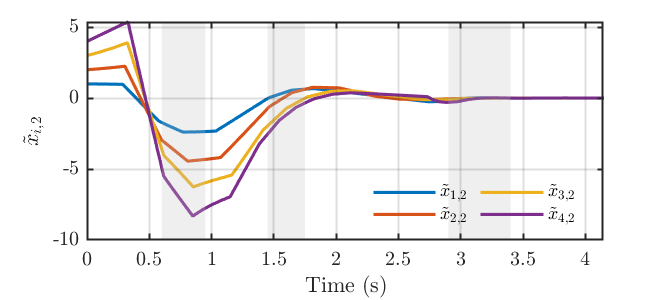}
		\label{fig:tracking_2_dos}
	}
	\caption{Tracking errors $\tilde{x}_i$ under DoS attacks.}
	\label{fig:dos_tracking}
\end{figure}

\begin{figure}[!htbp]
	\centering
	{
		\includegraphics[width=0.82\columnwidth]{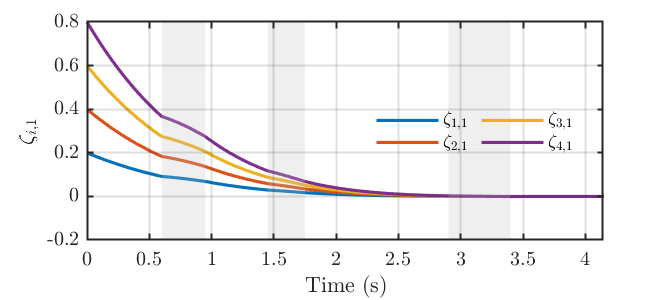}
		\label{fig:zeta_1_dos}
	}
	
	\vspace{-1.2em}
	{
		\includegraphics[width=0.82\columnwidth]{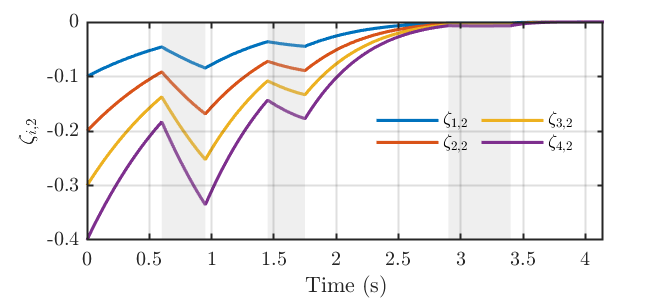}
		\label{fig:zeta_2_dos}
	}
	\caption{Distributed observation errors $\zeta_i$ under DoS attacks.}
	\label{fig:dos_distributed}
\end{figure}

\begin{figure}[!htbp]
	\centering
	{
		\includegraphics[width=0.82\columnwidth]{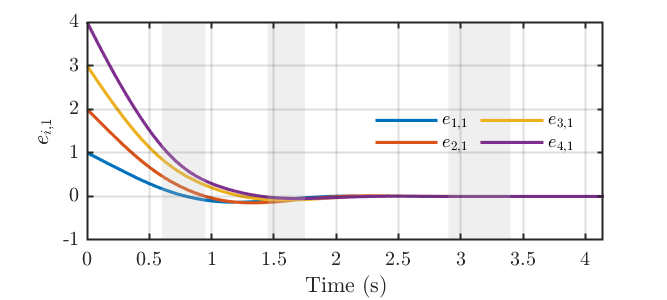}
		\label{fig:e_1_dos}
	}
	
	\vspace{-1.2em}
	{
		\includegraphics[width=0.82\columnwidth]{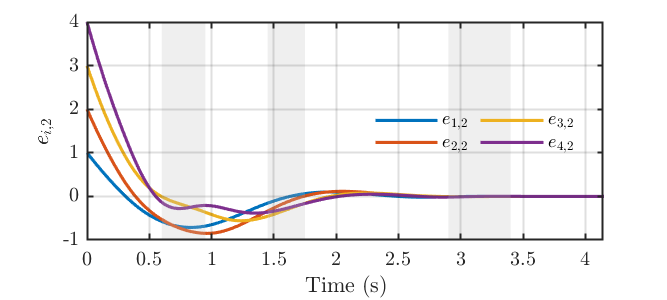}
		\label{fig:e_2_dos}
	}
	\caption{Local observation errors $e_i$ under DoS attacks.}
	\label{fig:dos_local}
\end{figure}
\begin{figure}[!htbp]
	\centering
	\includegraphics[width=0.82\columnwidth]{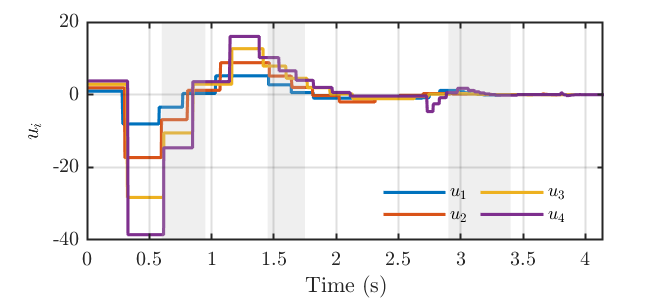}
	\caption{Control inputs $u_i$ under DoS attacks.}
	\label{fig:dos_control}
\end{figure}

As shown in Fig.~\ref{fig:dos_tracking}--Fig. \ref{fig:dos_local}, the tracking errors 
$\tilde x_i$, the distributed observation errors $\zeta_i$, and the local 
observation errors $e_i$ all approach zero before the extended prescribed 
time $t_{Fc}$. Fig.~\ref{fig:dos_control} shows the corresponding control inputs. These results are consistent with Theorem~\ref{Theorem 2} and demonstrate the resilience of the 
proposed scheme against intermittent communication denial. Similarly, the exclusion of Zeno behavior is demonstrated in Table \ref{tab:sampling_times_comparison}, where the simulation step size is $dt=1.0\times10^{-4}$ s.

To further demonstrate the advantages of the proposed method, a comparison is conducted under the same simulation setting. For a fair implementation, the method in \cite{ref-Fenggang-Auto} is implemented using full-state information and without event-triggered control, and the gain of the time-varying function is selected as $\alpha=3>0$, as required in \cite{ref-Fenggang-Auto}. The comparison results are presented in Fig. \ref{fig:compare}. It can be observed that the tracking errors under the method in \cite{ref-Fenggang-Auto} are not effectively suppressed in the presence of system nonlinearities and the directed communication topology, whereas the proposed method achieves accurate prescribed-time consensus tracking under DoS attacks.
\begin{figure}[!htbp]
	\centering
	{
		\includegraphics[width=0.81\columnwidth]{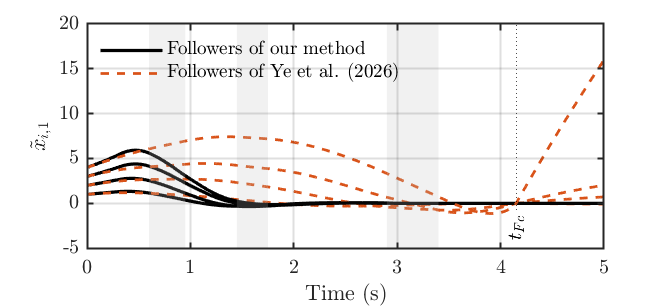}
		\label{Compare 1}
	}
	
	\vspace{-1.2em}
	{
		\includegraphics[width=0.81\columnwidth]{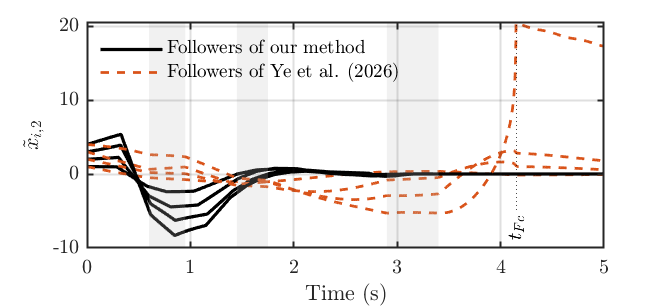}
		\label{Compare 2}
	}
	\caption{Comparison of tracking errors between the proposed method and the method in \cite{ref-Fenggang-Auto} under DoS attacks.}
	\label{fig:compare}
\end{figure}

\section{Conclusion}
This paper investigated prescribed-time leader-following consensus tracking for
high-order nonlinear MASs under DoS attacks with output-only measurements. A
distributed event-triggered output-feedback framework is developed by combining
a local observer, a switched distributed observer, and an
event-triggered controller. Based on PLEs and a resilient matrix pencil
analysis, a separation principle was established, and explicit design conditions are derived to handle nonlinearities, observer errors, tracking errors, and DoS-induced switching terms. The closed-loop MAS is shown to
achieve consensus tracking within the prescribed time without DoS attacks and
within an extended prescribed time under admissible DoS attacks. Zeno behavior
is excluded. Future work will consider heterogeneous MASs, relaxed attack
models, and experimental validation.

\bibliographystyle{cas-model2-names}
\bibliography{refs}

{\small

\Needspace{10\baselineskip}
{
\setlength{\columnsep}{6pt} % distance between photo and text
\begin{wrapfigure}[9]{l}{1.02in}
    \vspace{-6pt}
    \centering
    \includegraphics[width=1.00in,height=1.25in,clip,keepaspectratio]{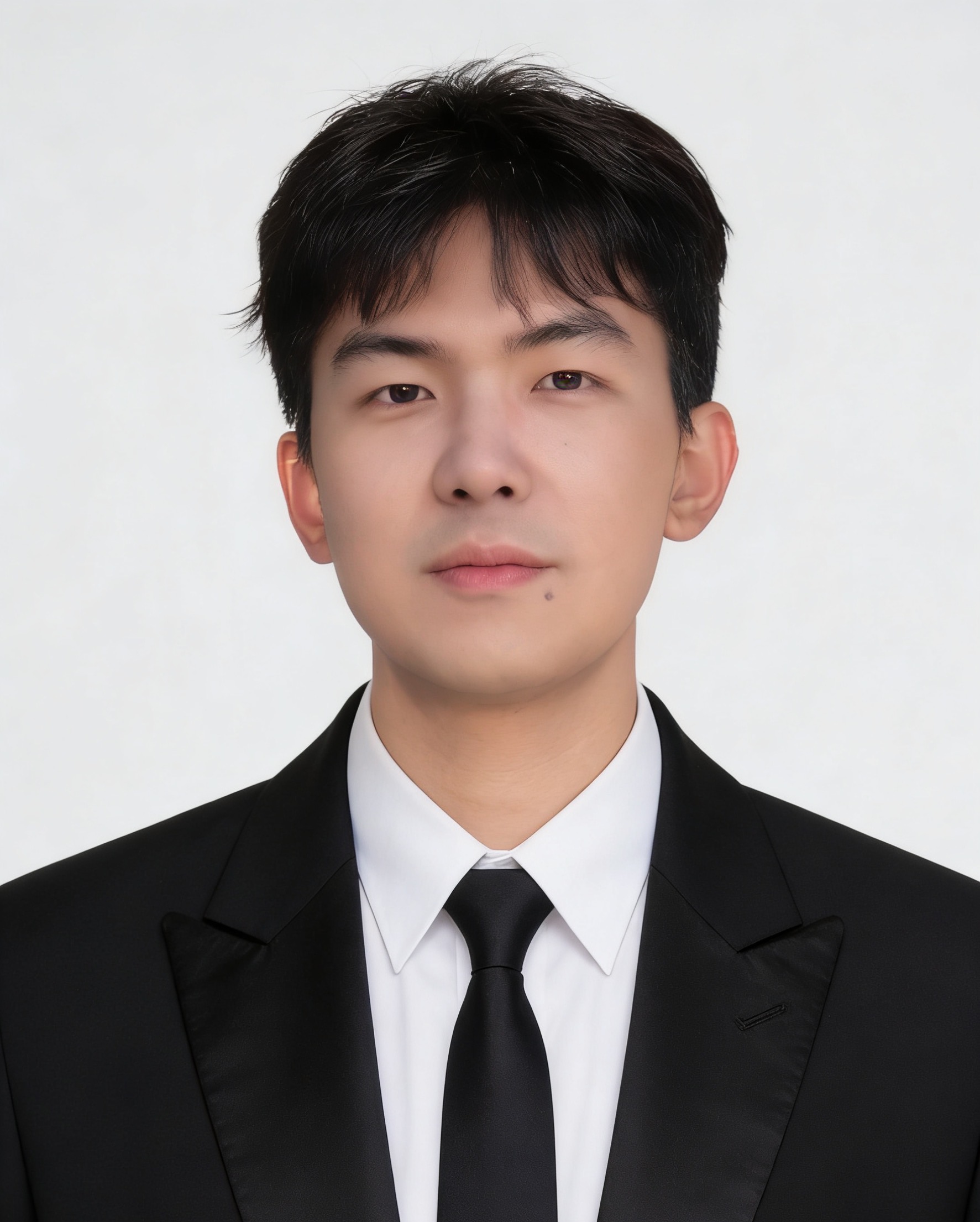}
    \vspace{-10pt}
\end{wrapfigure}
\noindent\textbf{Hongjian Chen} received the B.Eng. degree in Electronic Information Engineering from the University of Electronic Science and Technology of China, Chengdu, China, and the M.Sc. degree in Computer Control and Automation from Nanyang Technological University, Singapore, where he is currently pursuing the Ph.D. degree with the School of Electrical and Electronic Engineering. His current research interests include resilient control, complex systems, and multi-agent systems.
\par\vspace{0.3em}
}

\Needspace{10\baselineskip}
{
\setlength{\columnsep}{6pt} % distance between photo and text
\begin{wrapfigure}[10]{l}{1.02in}
    \vspace{-6pt}
    \centering
    \includegraphics[width=1.00in,height=1.25in,clip,keepaspectratio]{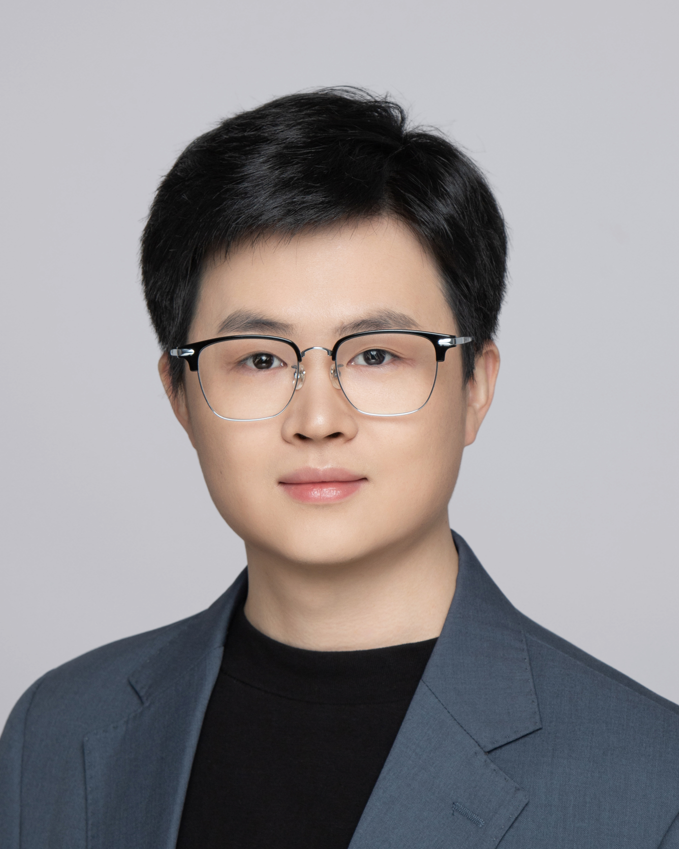}
    \vspace{-6pt}
\end{wrapfigure}
\noindent\textbf{Hefu Ye} received the B.Eng. degree from Harbin Institute of Technology in 2019 and the Ph.D. degree from Chongqing University in 2025. During 2022 and 2023, he was a Joint Ph.D. student with the School of Electrical and Electronic Engineering, Nanyang Technological University, Singapore. Following the completion of his Ph.D., he served as a Research Associate with The University of Hong Kong. Since September 2025, he has been a Postdoctoral Fellow with the University of Macau. His current research interests include prescribed-time control, robotics, and learning for control applications.
\par\vspace{0.3em}
}

% Dr. Ye is the recipient of the 2025 Andrew P. Sage Best Transactions Paper Award from the IEEE Systems, Man, and Cybernetics Society, as well as the Best Theory Paper Award at the China Automation Congress 2025.
\par\vspace{0.5em}

\Needspace{10\baselineskip}
{
\setlength{\columnsep}{6pt} % distance between photo and text
\begin{wrapfigure}[10]{l}{1.02in}
    \vspace{-6pt}
    \centering
    \includegraphics[width=1.00in,height=1.25in,clip,keepaspectratio]{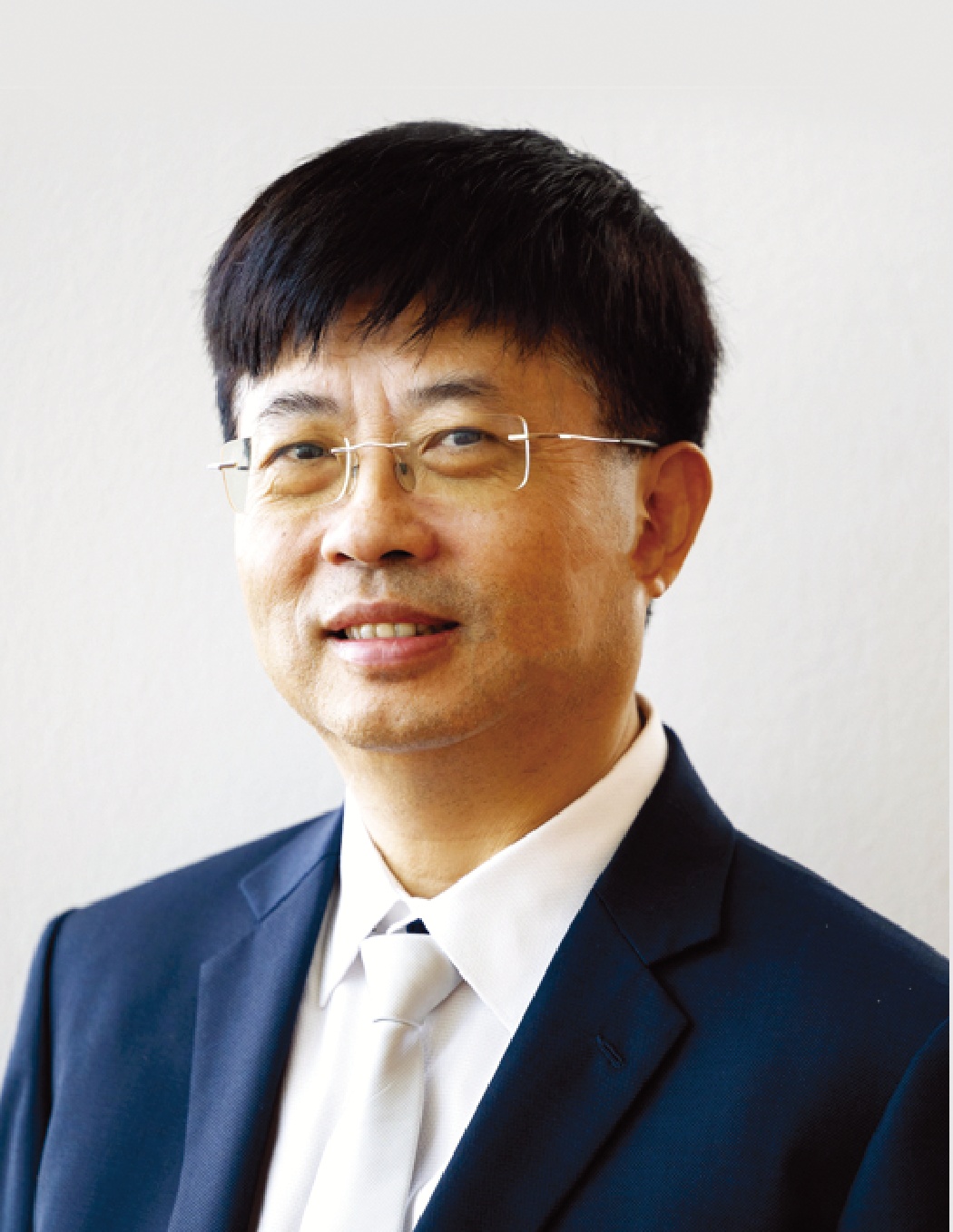}
    \vspace{-10pt}
\end{wrapfigure}
\noindent\textbf{Changyun Wen} received the B.Eng. degree from Xi'an Jiaotong University in 1983, and the Ph.D. degree from the University of Newcastle, Australia in 1990. From August 1989 to August 1991, he was a Postdoctoral Fellow at the University of Adelaide, Australia. Since August 1991, he has been with Nanyang Technological University, Singapore, where he is currently a Full Professor. Prof. Wen is a Fellow of IEEE and Fellow of the Academy of Engineering, Singapore. His main research activities are in the areas of control systems and applications, cyber-physical systems, smart grids, complex systems and networks.
\par\vspace{0.3em}
}

% Prof. Wen is a Fellow of IEEE and Fellow of the Academy of Engineering, Singapore. He was a member of IEEE Fellow Committee from January 2011 to December 2013 and a Distinguished Lecturer of IEEE Control Systems Society from 2010 to 2013. He is currently the co-Editor-in-Chief of IEEE Transactions on Industrial Electronics, Associate Editor of Automatica, and Executive Editor-in-Chief of Journal of Control and Decision. He also served as an Associate Editor of IEEE Transactions on Automatic Control from 2000 to 2002, IEEE Transactions on Industrial Electronics from 2013 to 2020, and IEEE Control Systems Magazine from 2009 to 2019.
\par\vspace{0.5em}

}

\end{document}